\documentclass[11pt,a4paper]{article}

% --- Basic layout & fonts ---
\usepackage[margin=1.05in]{geometry}
\usepackage[T1]{fontenc}
\usepackage[utf8]{inputenc}
\usepackage{lmodern}
\usepackage{microtype}

%\usepackage[colorlinks,allcolors=blue!60!black]{hyperref}

% --- Math & refs ---
\usepackage{amsmath,amsthm,amssymb}
\usepackage{mathtools}
\usepackage[authoryear,numbers]{natbib}

\usepackage{algorithm}
\usepackage{algpseudocode}
\usepackage{booktabs}
\usepackage{balance} % for balancing columns on the final 
\usepackage{etoolbox}
\usepackage{cleveref}

% --- Figures / notes ---
\usepackage{tikz}
\usetikzlibrary{arrows,decorations.pathmorphing,decorations.pathreplacing,backgrounds,positioning,fit,matrix}
\usetikzlibrary{shapes,calc}
\usepackage{todonotes}

% --- Authors & affiliations ---
\usepackage{authblk}

% --- Theorems (amsthm) ---
\usepackage{thmtools}
\usepackage{thm-restate}
\newtheorem{theorem}{Theorem}
\newtheorem{lemma}[theorem]{Lemma}

\theoremstyle{definition}
\newtheorem{definition}{Definition}

\newtheorem{observation}{Observation}
\newtheorem{corollary}{Corollary}
\newtheorem{reduction}{Reduction}
\newtheorem{example}{Example}

%=========== Pseudocode
\usepackage{algorithm}
\algnewcommand\algorithmicinput{\textbf{Input:}}
\algnewcommand\algorithmicoutput{\textbf{Output:}}
\algnewcommand\Input{\item[\algorithmicinput]}
\algnewcommand\Output{\item[\algorithmicoutput]}

%=== a numbered problem definition environment
\newtheoremstyle{probenv}%
{0pt}% Space above
{0em}% Space below
{\hangindent=\parindent}% Body font
{}% Indent amount
{\itshape}% Theorem head font
{{\normalfont:}}% Punctuation after theorem head
{.5em}% Space after theorem head
{}% Theorem head spec (can be left empty, meaning ‘normal’ )

\theoremstyle{probenv}
\newtheorem{problem}{Problem}%[chapter]
%[chapter]
\Crefname{problem}{Problem}{Problems}
\Crefname{problem}{Prob.}{Probs.}
\Crefname{parad}{Paradigm}{Paradigms}
\Crefname{parad}{Parad.}{Parads.}

\usepackage{xargs} % gives access to command 'newcommandx' with predefined entries
\usepackage{multirow}

\newcommandx{\problemdef}[6][3=Input,5=Question]{
	\begingroup
	\par\noindent\nopagebreak[4]
	\begin{problem}\label{prob:#2}\vspace{-0.5em}\colorbox{gray!17!white}{\textsc{#1}}\nopagebreak[4]\end{problem}\nopagebreak[4]
	\par\noindent\hangindent=\parindent\textbf{#3}:  #4\nopagebreak[4]
	\par\noindent\hangindent=\parindent\textbf{#5}:  #6
	\par\medskip
	\endgroup
}

\newcommandx{\problemdefO}[6][3=Input,5=Find]{
	\begingroup
	\par\noindent\nopagebreak[4]
	\begin{problem}\label{prob:#2}\vspace{-0.5em}\colorbox{gray!17!white}{\textsc{#1}}\nopagebreak[4]\end{problem}\nopagebreak[4]
	\par\noindent\hangindent=\parindent\textbf{#3}:  #4\nopagebreak[4]
	\par\noindent\hangindent=\parindent\textbf{#5}:  #6
	\par\medskip
	\endgroup
}

%===========problem definition evironment: gets three arguments (1) problem name #1 (2) input specification #2 (3) question specification #3

% ToAPPENDIX MACROS

\newcommand{\toappendix}[1]{%
	\gappto{\appendixProofText}{{#1}}
}
\newcommand{\appendixproof}[2]{%
	\gappto{\appendixProofText}{\subsection{Proof of \Cref{#1}}\label{proof:#1}#2}
}

%=========== Useful abbreviations
\newcommand{\NN}{\mathbb{N}} % natural numbers
 % real numbers
 % rational numbers
 % integer numbers

 % Landau symbol

\newcommand{\minEfx}{\textsc{minCost-EFx Allocation}}
\newcommand{\EfxOptim}{\textsc{minCost-EFx Allocation}}
\newcommand{\minEfxOptim}{\textsc{minCostFactor-EFx Allocation}}
\newcommand{\minEfxIndiv}{\textsc{$k$-Cost-EFx Allocation}}
\newcommand{\parti}{\textsc{Partition}}

\definecolor{airforceblue}{rgb}{0.38, 0.51, 0.71}\definecolor{lighterblue}{rgb}{0.7, 0.73, 0.89}\definecolor{darkerblue}{rgb}{0.28, 0.41, 0.61}
\definecolor{aliceblue}{rgb}{0.6, 0.73, 0.89}

% --- Operators ---

% --- Title & authors ---
\title{Minimizing the Cost of EFx Allocations}

\author{Eva Deltl}

\affil{TU Clausthal, Germany}

\date{} % empty = no date

\begin{document}
\maketitle

\begin{abstract}
Ensuring fairness while limiting costs, such as transportation or storage, is an important challenge in resource allocation, yet most work has focused on cost minimization without fairness or fairness without explicit cost considerations.
We introduce and formally define the \minEfx{} problem, where the objective is to compute an allocation that is envy-free up to any item (EFx) and has minimum cost. We investigate the algorithmic complexity of this problem, proving that it is NP-hard already with two agents. On the positive side, we show that the problem admits a polynomial kernel with respect to the number of items, implying that a core source of intractability lies in the number of items. Building on this, we identify parameter-restricted settings that are tractable, including cases with bounded valuations and a constant number of agents, or a limited number of item types under restricted cost models. Finally, we turn to cost approximation, proving that for any $\rho>1$ the problem is not $\rho$-approximable in polynomial time (unless $P=NP$), while also identifying restricted cost models where costs are agent-specific and independent of the actual items received, which admit better approximation guarantees.
\end{abstract}

% \noindent\textbf{Keywords:} Non-centroid fair clustering; Proportional fairness; Core stability; Max-loss objective.

% If you want a compact highlight:
%\paragraph{Highlights.} Counterexample: no $\alpha$-core under max-loss for any $\alpha<1+\tfrac{1}{9}$ (tight).

\section{Introduction}
Allocating resources is rarely free. In many settings, assigning an item to an agent incurs transportation, storage, or coordination costs, and some assignments may be infeasible altogether. In disaster relief, for example, food, water, and medical supplies must be distributed fairly while keeping transportation costs low. Similar constraints arise in task scheduling, where loads should be balanced without inflating completion time, and in institutions that allocate staff or funding under tight budgets while maintaining perceived legitimacy.

As a result, an allocation that satisfies a strong fairness guarantee in the standard cost-free model may be dominated or even infeasible once costs are taken into account. This motivates our study of fair allocation under cost functions.

We introduce a model that seeks cost-minimal allocations satisfying \emph{envy-freeness up to any item (EFx)}. In standard envy-freeness, no agent prefers another’s allocation over their own. EFx \citep{10.1145/3355902} relaxes this by requiring that if an agent envies another, removing any single item from the envied bundle eliminates the envy. By studying EFx under cost functions, we capture scenarios where fairness must coexist with logistical or budgetary constraints.

%Applications of fair allocation almost always involve explicit costs.  %In disaster relief, for example, food, water, and medical supplies must be distributed fairly while keeping transportation costs low. 
%Similar constraints arise in other domains. Task scheduling must distribute processor loads fairly without inflating overall completion time, and universities or companies must divide staff or funding in a way that appears fair, in order to be considered legitimate, while staying within strict budget limits.  These examples illustrate a common theme: fairness is necessary for acceptance and legitimacy, but efficiency cannot be ignored because real-world allocations are shaped by explicit cost structures. This motivates our study of EFx under cost functions, providing a framework that captures both dimensions.

\subsection{Related work}
%Defining fairness is a complex task. While \emph{envy-freeness (EF)}—where no individual prefers another's allocation to their own—is an ideal, it is often challenging or impossible to attain in many real-world situations. For divisible resources, EF can always be achieved, e.g., through cake-cutting protocols \citep{gamow1958puzzle,Brams_Taylor_1996}, but for indivisible goods EF cannot be guaranteed, such as when a single item must be shared by two agents. \todo{We could shorten this to - "Exact envy-freeness is generally unattainable with indivisible goods, which has motivated weaker relaxations." }
%Therefore, alternative fairness concepts are essential. 
Exact envy-freeness is generally unattainable with indivisible items, which has motivated alternative fairness concepts.
\citet{doi:10.1086/664613} introduced \emph{envy-freeness up to one item (EF1)}, which is satisfied if any envy can be eliminated by removing a single item from another agent’s bundle. 
EF1 allocations are guaranteed to exist and can be computed in polynomial time \citep{Lipton}.
\citet{10.1145/3355902} proposed the stronger criterion \emph{envy-freeness up to any item (EFx)}, requiring that envy disappears regardless of which single item is removed. 
Existence of EFx allocations is unresolved in general, though they have been shown to exist in special cases, such as for three agents \citep{10.1145/3616009,10.1145/3580507.3597799} or identical rankings \citep{doi:10.1137/19M124397X}. 
In \citeyear{doi:10.1137/19M124397X}, \citet{doi:10.1137/19M124397X} presented an algorithm for computing an EFx allocation when all agents have the same valuation function $v(\cdot)$, where the only assumption on $v(\cdot)$ is that it is monotone.
In the case of additive valuations with identical rankings, where each agent ranks the items in the same order based on their individual valuations, the algorithm returns an EFx allocation in $\mathcal{O}(mn^3)$ time.
Other fairness relaxations include \emph{proportionality up to one item (PROP1)} and the \emph{maximin share (MMS)} guarantee \citep{doi:10.1086/664613,caragiannis2019unreasonable}.

Efficiency is a significant consideration in this context, with concepts such as the Nash Social Welfare (NSW), which seeks to maximize the product of agents' utilities \citep{75cce021-3400-34e6-b232-0a306c8c3f69}, providing a balance between fairness and efficiency. Weighted versions of NSW have also been studied \citep{feng_et_al:LIPIcs.ICALP.2024.63}, embedding priorities or implicit costs into the objective. Another key concept is that of Pareto optimality, which ensures that no reallocation can improve the situation of one agent without simultaneously worsening it for another. 

The complexity of efficient and envy-free resource allocation has been studied in the past \citep{10.5555/3060621.3060636,comp}. \citet{AZIZ2023773} have examined the complexity of the existence and computation of allocations that are both welfare-maximizing and fair for various fairness concepts, including EF1 and EFx. %They found that deciding whether an allocation exists that is both utilitarian-maximal (UM), i.e., one that maximizes the sum of all agents' utility, and EFx is NP-complete, even when only two agents are involved. 
Further results characterize EFx in restricted settings \citep{10.1007/978-3-031-43254-5_17,Morozov_Ignatiev_Dementiev_2024}, showing that the number of EFx allocations can be significantly smaller than expected, which narrows the algorithmic search space. 
Approximation approaches have also been developed, particularly for MMS allocations \citep{10.1145/3147173,GARG2021103547} and for NSW maximization, which yields EF1 and MMS guarantees \citep{caragiannis2019unreasonable}.

Several works have explored approximation notions related to EFx, either by donating items to charity \citep{10.1145/3616009}, sometimes to achieve approximate Nash welfare \citep{approx2}, or by designing algorithms that guarantee approximate EFx \citep{doi:10.1137/19M124397X,approx}.

A complementary line of research explicitly analyzes the trade-off between fairness and efficiency. 
\citet{Bei_2021} introduced the notion of the \emph{price of fairness} for indivisible items, quantifying the efficiency loss (e.g. welfare) caused by imposing fairness constraints. 

Our problem also resembles classical challenges in bin packing \citep{10.1007/s00453-024-01224-5} and job scheduling \citep{koutecky_et_al:LIPIcs.ISAAC.2020.18}, which have been extensively studied in approximation algorithms and computational complexity. 
In bin packing, items of varying sizes must be placed into a limited number of bins without exceeding capacity, paralleling our task of distributing items among agents under EFx constraints. 
In job scheduling, tasks are assigned to resources (e.g., processors) to optimize criteria such as minimizing completion time, illustrating the same trade-off between efficiency and fairness; fairness-aware scheduling has also been explored \citep{10.1007/978-3-031-43254-5_14,BILO201665}.  

Beyond these theoretical connections, fairness combined with cost considerations has also been studied in applied optimization. 
In logistics, \citet{REY201819} integrated envy-freeness into food rescue and delivery, coupling fair allocation with vehicle routing costs. 
In facility location, \citet{FILIPPI2021959} proposed a bi-objective model that minimizes infrastructure costs while ensuring equitable access. 

\subsection{Our Contributions}

In order to address the limitations of traditional fair division methods, we introduce a cost function to the concept of EFx, thereby accounting for a variety of factors, including diverse needs, geographic considerations, and resource availability. Consequently, we formally define the following \emph{Optimization Problem}:

\problemdefO{\EfxOptim}{minEFx}
{A set $M = \{x_1, \dots, x_m\}$ of items, a set $A = \{a_1, \dots, a_n\}$ of agents, a valuation function $v: \mathcal{P}(M) \rightarrow \NN$ and agent-specific cost functions $c_i: \mathcal{P}(M) \rightarrow \NN$ for each agent $a_i \in A$.}
{An EFx Allocation $\mathcal{X} = (X_1, \dots, X_n)$, which minimizes the total cost function $c(\mathcal{X}) = \sum_{a_i \in A} c_i(X_i)$.}

Given the uncertainty surrounding the existence of EFx allocations, the problem definition is restricted to scenarios in which all agents share the same valuation function $v$. Under this assumption, \citet{doi:10.1137/19M124397X} proved that EFx allocations are guaranteed to exist.

In this study, we limit our attention to additive valuation functions, which define the value of a set of items as the sum of the individual item values. Our objective is to identify scenarios in which computations can be performed in a more efficient manner.

\smallskip
Next, we provide an overview of our computational hardness and algorithmic results. We begin by showing that the complexity of the general problem is primarily driven by the number of items. Building on this, we establish that \minEfx \ admits a problem kernel of size polynomial in $m$ (\Cref{cor:pkm}). To deepen our understanding of the problem’s structure, we make several key observations before proving that the problem is NP-hard for two or more agents (\Cref{thm:2np}, \Cref{thm:snp}) and $W[1]$-hard with respect to the number of agents $n$ even if all numbers in the input are encoded in unary (\Cref{w1}). 

Motivated by these hardness results, we then investigate parameter-restricted variants. In particular, we show that when valuations are polynomially bounded in the number of items, and the number of agents is constant, the problem can be solved in polynomial time (\Cref{thm:polybound-cost}). At the same time, we establish that, the general problem admits no polynomial-time $\rho$-approximation for any $\rho>1$ (\Cref{thm:inapprox2}).

This lead to the analysis of a structured cost model where agent cost functions are entirely valuation-dependent.
%, allowing stronger guarantees.
 In this setting, we analyze instances with a bounded number of item types and design an $\mathcal{O}(n^2 \cdot m^{3\beta})$ algorithm (\Cref{thm:types}), where $\beta$ is the number of item types. We further initiate the study of approximation algorithms for this variant, proving an inapproximability bound of $\tfrac{4}{3}$ (\Cref{thm:inapprox}). See \Cref{tab:overview} for an overview of our complexity results.

\begin{table}[t]
	\centering
	\setlength{\tabcolsep}{9pt}
	\renewcommand{\arraystretch}{1.2}
	\begin{tabular}{@{}lcc@{}}
		\toprule
		& General costs & Restricted costs \\
		\midrule
		-
		& \multicolumn{2}{c}{weakly NP-hard already for $n=2$} \\
		\addlinespace[3pt]
		By $n$
		& \multicolumn{2}{c}{$W[1]$-hard;\ \ XP if $v_{\max}\le \mathrm{poly}(m)$} \\
		By $m$
		& \multicolumn{2}{c}{polynomial kernel} \\
		By $\beta$
		& -- & XP \\
		\bottomrule
	\end{tabular}
	\caption{Overview of results for \minEfx.
	Here, $n$ is the number of agents, $m$ the number of items, and $\beta$ the number of item types.}
	\label{tab:overview}
\end{table}

Many details, marked by $\star$, can be found in the appendix.
%a full version of this paper.

% ---------- Colors ----------
\definecolor{fptgreen}{RGB}{161,228,118}
\definecolor{npred}{RGB}{224,60,49}
\definecolor{npred!light}{RGB}{244,170,165}
\definecolor{ppkyellow}{RGB}{255,221,120}
\definecolor{unknownblue}{RGB}{190,210,255}
\colorlet{edgegray}{black!55}

%%%%%%%%%%%%%%%%%%%%%%%%%%%%%%%%%%%%%%%%%%%%%%%%%%%%%%%%%%%%%%%%%%%%%%%%

\section{Preliminaries}

We denote by $A=\{a_1,\dots,a_n\}$ the set of \emph{agents} and by $M=\{x_1,\dots,x_m\}$ the set of \emph{items}. Each agent $a_i \in A$ has a \emph{cost function} $c_i:\mathcal{P}(M)\to\mathbb{N}$ and a \emph{valuation function} $v_i:\mathcal{P}(M)\to\mathbb{N}$ defined over the set of items in $M$; for a single item $g\in M$, we write $v_i(x)$ as shorthand for $v_i(\{x\})$. %A valuation function $v$ is said to be \emph{monotone} if $S \subset T$ implies $v(S) \leq v(T)$ for all $S,T \subseteq M$.\todo{nötig?} 
A valuation function $v$ is \emph{additive} if $v(S) = \sum_{x \in S} v({x})$ for all $S \subseteq M$. 
%Additive valuation functions are, by definition, also monotone. 
We say that agents share the \emph{same valuation function} if $v_i = v_j$ for all agents $a_i,a_j \in A$. 
%Agents have \emph{identical rankings} if, for any two items $x$ and $x'$ in $M$, the inequality $v_i({x}) \geq v_i({x'})$ holds for an agent $a_i \in A$ if and only if it holds for every other agent $a_j \in A$.

An \emph{allocation} $\mathcal{X} = (X_1, \dots, X_n)$ represents a partition of a set of indivisible items into $n$ subsets,  which may also be referred to as bundles.

\smallskip \noindent
\emph{Fairness.}
	Given an allocation $\mathcal{X} = (X_1, \dots, X_n)$, we say that an agent $a_i$ \emph{envies} an agent $a_j$ if $v_i(X_i)<v_i(X_j)$.
\begin{itemize}
	\item[\textbf{EFx}](Envy-freeness up to any item) ensures envy can be eliminated by removing any single item from another agent's bundle. Formally, for any two agents \(a_i\) and \(a_j\) with allocations \(X_i\) and \(X_j\), for all items \(x \in X_j\):
	\[
	v_i(X_i) \geq v_i(X_j \setminus \{x\})
	\]
\end{itemize}

We say that an agent $a_i$ \emph{strongly envies} an agent $a_j$ if $v_i(X_i)<v_i(X_j\setminus \{x\})$ for some $x\in X_j$.
We say that an allocation $\mathcal{X}$ is \emph{feasible} with respect to EFx if no agent strongly envies another.

\emph{Parameterized complexity.} 
We use basic notations from parameterized complexity and algorithmics as described by \citet{dblp1894779},  \citet{FELLOWS200953}, \citet{Nie06} and \citet{doi:10.1137/S0097539703427203}.

A \emph{parameterized problem} is a subset \( L \subseteq \Sigma^* \times \mathbb{N} \), where \( \Sigma \) is a finite alphabet. $L$ is considered \emph{fixed-parameter tractable} (FPT) if there exists an algorithm that can determine, if for every given instance $(x,k)$, $k$ denoting the parameter, it can be decided whether $(x,k) \in L$ (\texttt{yes}-instance) or $(x,k) \notin L$ (\texttt{no}-instance) in $f(k) \cdot |x|^{\mathcal{O}(1)}$ time, where $f$ is some computable function, depending solely on the parameter \( k \). The class of such problems is denoted as FPT. The \emph{class XP} contains parameterized problems that can be solved in time \( |x|^{f(k)} \) for every instance \( (x, k) \), where \( f \) is a computable function. A problem is considered \emph{para-NP-hard} with respect to a parameter \( k \) if it remains NP-hard even when \( k \) is fixed to a constant value.

We also consider the concept of \emph{kernelization}. For a problem \( L \) with a parameter \( k \), \emph{kernelization} involves transforming every instance \( I \) into a smaller instance \( I' \) with a new parameter \( k' \), such that the size of \( I' \) and the value of \( k' \) are bounded by some function \( g(k) \). The original instance \( I \) is a yes-instance if and only if the reduced instance \( I' \) is a yes-instance. If the function \( g(k) \) is polynomial in \( k \), then the problem \( L \) is said to admit a \emph{polynomial kernel}.

\emph{Approximation.} We use basic concepts from the study of approximation algorithms as described by \citet{vazirani2001approximation} and \citet{williamson2011design}.

%%%%%%%%%%%%%%%%%%%%%%%%%%%%%%%%%%%%%%%%%%%%%%%%%%%%%%%%%%%%%%%%%%%%%%%%

\section{General Cost Functions}

We begin by showing that \minEfx\ is weakly NP-hard, even in the case of just two agents. This motivates a parameterized study of the problem, where we establish that \minEfx\ admits a polynomial kernel with respect to the number of items~$m$. Specifically, we prove that every instance can be reduced, in polynomial time, to an equivalent instance involving at most $m^3$ agents. Moreover, when valuations are bounded and the number of agents is constant, we obtain a polynomial-time algorithm.

\subsection{NP-hardness}\label{chap:np}

Many proofs presented in this paper build on a fundamental observation about EFx allocations:

\begin{observation}\label{thm:diff}
	In every EFx allocation \( \mathcal{X} = (X_1, \dots, X_n) \) where the agents have identical valuation functions, the difference in valuation between any two sets allocated to different agents is bounded by the value of the most valuable item.
\end{observation}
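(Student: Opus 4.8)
The plan is to derive the bound directly from the EFx feasibility condition combined with additivity of the shared valuation function $v$. Fix any two agents $a_i, a_j$ and, without loss of generality, assume $v(X_j) \ge v(X_i)$, so that it is $a_i$ who may envy $a_j$; the reverse case is symmetric. I will then show that the gap $v(X_j) - v(X_i)$ is at most the largest single-item value, which is exactly the claimed bound.

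First I would dispose of the degenerate case $X_j = \emptyset$: here $v(X_j) = 0 \le v(X_i)$, forcing $v(X_i) = v(X_j)$ under our assumption, so the difference is $0$ and the bound holds trivially. Assuming $X_j \ne \emptyset$, the fact that $\mathcal{X}$ is feasible with respect to EFx means $a_i$ does not strongly envy $a_j$, i.e.
\[
v(X_i) \;\ge\; v(X_j \setminus \{x\}) \quad \text{for every } x \in X_j .
\]
Here I crucially use that the agents share the same valuation $v$, so that $a_i$'s envy toward $a_j$ is measured by the very function that defines the value of $X_j$.

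The key step is to apply additivity: for any $x \in X_j$ we have $v(X_j \setminus \{x\}) = v(X_j) - v(x)$. Substituting into the EFx inequality and rearranging gives
\[
v(X_j) - v(X_i) \;\le\; v(x) \quad \text{for every } x \in X_j .
\]
Choosing $x$ to be any item of $X_j$ (for instance one of minimum value) then bounds the difference by $\max_{y \in M} v(y)$, the value of the most valuable item. Together with the symmetric case this yields $|v(X_i) - v(X_j)| \le \max_{y \in M} v(y)$ for every pair of agents.

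I do not expect a genuine obstacle: once additivity converts ``remove any single item'' into ``subtract that item's value,'' the statement is essentially a one-line rearrangement. The only points deserving a little care are orienting the inequality correctly — one must invoke the EFx condition from the perspective of the poorer agent looking at the richer bundle — and separately handling the empty-bundle case, since the EFx inequality quantifies over items of $X_j$ and is informative only when $X_j$ is nonempty.
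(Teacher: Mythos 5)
Your proof is correct and matches the paper's reasoning: the paper states this observation without proof (treating it as immediate), and the exact rearrangement you use --- additivity turning $v(X_j\setminus\{x\})\ge$-conditions into $v(X_j)-v(X_i)\le v(x)$ --- is precisely the step the paper invokes implicitly here and spells out in its appendix proof of the threshold observation. Your handling of the empty-bundle case and the orientation of the inequality are fine; note also that your argument actually yields the slightly stronger bound $v(X_j)-v(X_i)\le\min_{x\in X_j}v(x)$, of which the stated bound is a weakening.
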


\begin{observation}[$\star$]\label{threshold}
		In every EFx allocation with $n$ agents, every bundle must have a value of at least
		\[
		\frac{v(M)-\sum_{t=1}^{n-1} v(x_{(t)})}{n} \ ,
		\] where $x_{(1)},x_{(2)},\dots,x_{(m)}$ are the items in $M$ sorted by decreasing valuation.
\end{observation}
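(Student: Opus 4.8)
The plan is to fix an arbitrary agent $a_i$ and bound $v(X_i)$ from below by combining the $n-1$ EFx constraints that $a_i$ imposes on the other bundles. Since all agents share the same additive valuation $v$, the EFx condition for $a_i$ against $a_j$ requires $v(X_i) \ge v(X_j \setminus \{x\})$ for every $x \in X_j$; the weakest such inequality, and hence the binding one, comes from removing the least valuable item of $X_j$. Writing $\mu_j := \min_{x \in X_j} v(x)$ (and $\mu_j := 0$ when $X_j = \emptyset$), additivity turns this into $v(X_j) \le v(X_i) + \mu_j$ for every $j \ne i$.

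First I would sum these $n-1$ inequalities and then add the trivial identity $v(X_i) = v(X_i)$, which gives
\[
v(M) = \sum_{j=1}^{n} v(X_j) \le n \cdot v(X_i) + \sum_{j \ne i} \mu_j .
\]
The crucial point here is that deliberately excluding $X_i$ from the collection of removed items leaves exactly $n-1$ of them, one per bundle $X_j$ with $j \ne i$.

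Next I would bound $\sum_{j \ne i} \mu_j$. Each $\mu_j$ is the value of a single item, and because the bundles are pairwise disjoint these are $n-1$ \emph{distinct} items of $M$. Hence their total value cannot exceed the value of the $n-1$ most valuable items, i.e. $\sum_{j \ne i} \mu_j \le \sum_{t=1}^{n-1} v(x_{(t)})$. Substituting and rearranging yields $v(X_i) \ge \bigl(v(M) - \sum_{t=1}^{n-1} v(x_{(t)})\bigr)/n$, and since $a_i$ was arbitrary the bound holds for every bundle.

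The main subtlety, rather than a genuine obstacle, is the bookkeeping in the first two steps: one must isolate $X_i$ so that precisely $n-1$ (not $n$) item values are subtracted, which is exactly what makes the bound match the claimed expression; naively summing over all $j$ would only yield the weaker $\sum_{t=1}^{n} v(x_{(t)})$. I would also dispatch the degenerate cases (empty bundles, or $m < n-1$) via the convention $\mu_j = 0$ and by noting that $\sum_{t=1}^{n-1} v(x_{(t)})$ is a valid upper bound regardless of how many items actually exist.
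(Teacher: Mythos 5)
Your proof is correct and follows essentially the same route as the paper's: sum the EFx inequalities against a fixed bundle, add the trivial identity for that bundle, and use disjointness of the bundles to bound the $n-1$ removed item values by $\sum_{t=1}^{n-1} v(x_{(t)})$. The only differences are cosmetic — you bound an arbitrary bundle directly rather than the minimum-value one, and you explicitly handle empty bundles via $\mu_j=0$ (a small point the paper glosses over); note only that the constraint from removing the least valuable item is the \emph{strongest} (binding) one, not the weakest, though the inequality you actually use is the right one.
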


\appendixproof{threshold}{
To satisfy the EFx condition, it must hold for all pairs of agents $a_i,a_j \in A$ that
\begin{align*}
	v(X_i) - \min_{x \in X_i} v(x) \leq v(X_j).
\end{align*}
Let $v(M) = \sum_{x \in M} v(x)$ and let $x_{(1)},x_{(2)},\dots,x_{(m)}$ be the items in $M$ sorted by decreasing
valuation. Since for each $i\neq j$ we denote by $\min_{x\in X_i} v(x)$ the value of some item in $X_i$, and the bundles are disjoint, the sum over $(n-1)$ agents is at most the sum of the $(n-1)$ largest item values. Thus, summing
the inequalities over all $a_i\in A$ and taking $v(X_j)=\min_{a_j \in A} v(X_j)$ yields:
\begin{align*}
	n \cdot \min_{a_j \in A} v(X_j)
	&\geq v(M) - \sum_{a_i \in A\setminus\{a_j\}} \min_{x \in X_i} v(x) \\
	&\ge v(M) - \sum_{t=1}^{n-1} v(x_{(t)}).
\end{align*}
Hence,
\[
\min_j v(X_j) \geq \frac{v(M) - \sum_{t=1}^{n-1} v(x_{(t)})}{n}.
\]
Therefore, every bundle in an EFx allocation must have a value of at least
$\frac{v(M)-\sum_{t=1}^{n-1} v(x_{(t)})}{n}$, completing the proof.\qed}

This condition imposes a minimum threshold for each agent's bundle value. By designing instances where this threshold is met only under nearly equal distribution of value, we can leverage hardness reductions from classical number problems such as \textsc{Partition} (\Cref{prob:parti}) and \textsc{Bin Packing} (\Cref{prob:bin}). This leads to our first hardness result.

\begin{theorem}[$\star$]\label{thm:2np} 
	\minEfx \ is weakly NP-hard, even with two agents.
\end{theorem}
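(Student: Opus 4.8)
The plan is to reduce from \parti{} (\Cref{prob:parti}), which is weakly NP-hard. Given an instance with positive integers $s_1,\dots,s_m$ and $\sum_{i} s_i = 2S$, I construct a \minEfx{} instance with two agents $a_1,a_2$ sharing the additive valuation $v$ given by $v(x_i)=s_i$, so that $v(M)=2S$. Since the general cost model imposes no additivity restriction on the $c_i$, I use the costs to penalise any deviation from a perfectly balanced split: for $i\in\{1,2\}$ set $c_i(X)=0$ whenever $v(X)=S$ and $c_i(X)=1$ otherwise. (Equivalently one may take the convex penalty $c_i(X)=\bigl(v(X)-S\bigr)^2$ with threshold $2S^2$; both are computable in polynomial time from the given $v$.) The associated decision question asks whether some EFx allocation has total cost at most $C=0$.

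I would first treat the forward direction. If the \parti{} instance is a yes-instance, pick $T\subseteq M$ with $v(T)=S$ and consider $\mathcal{X}=(T,\,M\setminus T)$. Both bundles then have value exactly $S$, so neither agent envies the other and $\mathcal{X}$ is trivially EFx, with cost $0+0=0\le C$. Conversely, suppose $\mathcal{X}=(X_1,X_2)$ is an EFx allocation of total cost at most $0$. As costs are nonnegative, this forces $c_1(X_1)=c_2(X_2)=0$, hence $v(X_1)=S$, so $X_1$ witnesses that the \parti{} instance is solvable. Therefore the constructed instance admits an EFx allocation of cost $\le C$ if and only if the original \parti{} instance is a yes-instance; and since EFx allocations always exist under identical valuations, the minimisation is well defined.

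The reduction runs in polynomial time, and every number it produces (the values $s_i$ and the $\{0,1\}$-valued costs, or the $O(S^2)$ costs of the convex variant) is polynomially bounded in the binary encoding of the input, so weak NP-hardness of \parti{} transfers to \minEfx{} already for $n=2$. The only delicate point I anticipate is the reverse direction: one must be certain that the penalty cannot be ``cheated,'' i.e. that no EFx allocation reaches cost $0$ without realising an exact partition. This is immediate here because each bundle's cost depends only on its valuation and vanishes exactly at value $S$; moreover \Cref{threshold} already guarantees that every EFx allocation is nearly balanced, so the penalty needs only to certify exactness rather than enforce approximate balance, which explains why a valuation-dependent cost suffices. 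The same template---items encoding the input numbers together with a valuation-dependent penalty minimised at an equal split---is what I would extend to more agents via \textsc{Bin Packing} for \Cref{thm:snp}.
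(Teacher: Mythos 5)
There is a genuine gap, and it lies in your cost model rather than in the logic of the reduction. Your cost functions $c_i(X)=0$ if $v(X)=S$ and $c_i(X)=1$ otherwise (or the convex penalty $(v(X)-S)^2$) are \emph{not additive} and cannot be written down explicitly in the input: a general set function on $\mathcal{P}(M)$ needs exponential space, so you are implicitly switching to a model where costs are given as polynomial-time computable oracles. The paper, by contrast, works throughout with additive costs specified by per-item values --- every other result (the brute-force algorithm, the dynamic programs, and in particular the kernelization of \Cref{cor:pkm}, whose proof explicitly assumes ``an additive cost vector $\mathbf{c}_i \in \mathbb{Z}^m$'') relies on this encoding. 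A hardness proof for the permissive oracle model does not establish hardness of the problem the paper studies; worse, in that model the result is nearly vacuous, because the cost function can encode \emph{any} NP-hard predicate on the allocation directly. Indeed, note that EFx plays no role in your reverse direction: any allocation of cost $0$, EFx or not, yields a partition. You have effectively reduced \parti{} to ``minimize an arbitrary succinctly given set function over allocations,'' which is trivially hard and says nothing about the interaction of fairness with costs.

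The paper's proof shows how to achieve the same effect with purely additive, per-item costs: it adds a dummy item $x_{\min}$ with $v(x_{\min})=0$, sets $c_1 \equiv 0$, $c_2(x_i)=v(x_i)$ for the number items, and $c_2(x_{\min})=T+1$, and asks for cost at most $T$. The expensive zero-valued item is thereby forced into $X_1$; since it has value $0$, the EFx constraint for the pair collapses to exact envy-freeness $v(X_1)\le v(X_2)$, while the cost bound gives $v(X_2)\le T$ and hence $v(X_1)\ge T$, forcing $v(X_1)=v(X_2)=T$. Here the EFx condition is what does the work, and all numbers are explicit per-item integers. To repair your argument, replace the valuation-dependent penalty by such an additive gadget; the forward direction of your proof can stay essentially as it is.
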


In the following proof we give a polynomial-time many-one reduction from the weakly NP-hard problem \parti \ \citep{NP-co1}.

\problemdef{\parti}{parti}
{A multiset $S = \{s_1, \dots, s_m\}$ of $m$ positive integers with $\sum_{s \in S} s = 2\cdot T$.}
{Is there a partition of $S$ into $S_1,S_2$ such that every subset sums up to $T$?}

\begin{proof}[Proof (\Cref{thm:2np})]
	Let $I = S$ be an instance of \parti, where $\sum_{s \in S} s = 2\cdot T$ and $|S| = m$. We construct an instance $I' := (\{a_1, a_2\}, M, v, c)$ of \minEfx \ in polynomial time as follows.
	\vspace{.5em}\\
	\textbf{Construction:}
	For each integer $s_i \in S$ we add an item $x_i$ to $M$ with valuation $v(x_i) = s_i$. Furthermore we add one item $ \{x_{\min}\} $ with valuation $v(x_{\min}) = 0$. The cost functions look as follows: \begin{align*}
		c_1(x_i) &= 0 & \forall x_i \in M \\
		c_2(x_i) &= x_i & \forall x_i \in M \setminus x_{\min} \\
		c_2(x_{\min}) &= T+1
	\end{align*} This finishes the construction. 
	We refer to the appendix for the full proof.
\end{proof}

\appendixproof{thm:2np}{
	\textbf{Correctness:} We claim that $I$ is a \texttt{yes}-instance if and only if the minimum cost of $I'$ is $\leq T$. 
	
	\smallskip
	($\Rightarrow$) Let $S_1, S_2$ be a solution of $I$. We claim that $\mathcal{X} = (X_1, X_2)$ with \begin{align*}
		X_1 = \bigcup_{s_i \in S_1}\{ x_i \} \cup \{x_{\min}\} \text{ and } X_2 = \bigcup_{s_i \in S_2} \{x_i \}
	\end{align*} is a solution to $I'$ of cost $T$. Note that by construction $v(X_1) = v(X_2) = T$. Which means \begin{align*}
		\text{cost}(X) &= c_1(X_1) + c_2(X_2) \\
		&= 0 + v(X_2) \\
		&= T \leq cost(X)
	\end{align*} Lastly we have to show that $\mathcal{X}$ is an EFx Allocation. As $v(X_1) = v(X_2) = T$, $a_1$ and $a_2$ do not envy each other. Hence $\mathcal{X}$ is EFx and has cost $\leq T$.
	
	\smallskip
	($\Leftarrow$) Let $\mathcal{X} = (X_1, X_2)$ be a solution of $I'$. Note that as $c_2(x_{\min}) = T+1 > T$, item $x_{\min}$ has to be assigned to bundle $X_1$. 
	Since $\mathcal{X}$ is a solution of $I'$, we have $T \geq c_2(X_2) \geq v(X_2)$, which as $\sum_{x \in M} v(x) = 2 \cdot T$ implies
	\begin{align*}
	v(X_1) = 2\cdot T - v(X_2) \ge T.
	\end{align*}
	Since $\mathcal{X}$ is an EFx assignment, $v(X_1) - \min_{x \in X_1} v(x) \leq v(X_2)$.
	As $x_{\min}\in X_1$ and $v(x_{\min})=0$, we have $\min_{x \in X_1} v(x)=0$ and thus $v(X_1) \leq v(X_2)$. Together with $v(X_1)+v(X_2)=2T$, it follows that $v(X_1) = v(X_2) = T$. Since $v(x_{\min}) = 0$ it will not influence the sum. We can therefore now define $S_1, S_2$ as follows:
	\begin{align*}
	S_1 = \bigcup_{x_i \in X_1 \setminus \{x_{\min}\}} s_i
	\quad\text{and}\quad
	S_2 = \bigcup_{x_i \in X_2} s_i.
	\end{align*}
	We know that $\sum S_1 + \sum S_2 = \sum S = 2 \cdot T$ and by definition $\sum S_1 = v(X_1) = T = v(X_2) = \sum S_2$. It follows that $\sum  S_1 = \sum S_2$, meaning $I$ is a \texttt{yes}-instance for \parti.
	\smallskip
	
	It follows that \minEfxIndiv \ is weakly NP-hard even if we have only $2$ agents.
}

This computational hardness arises because the costs can be manipulated to force certain assignments, making the problem exceedingly difficult to solve or approximate efficiently (see \Cref{thm:inapprox2}).
Following these hardness results, we now analyze the parameterized complexity of \minEfx.

\subsection{Parameterized Complexity}\label{chap:base}

\begin{theorem}\label{bruteforce}
	Given an instance $I = (A, M, v, c)$ with $n$ agents and $m$ items a solution to \minEfx \ can be computed in $\mathcal{O}(m \cdot n^{2m})$ time.
\end{theorem}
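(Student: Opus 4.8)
The statement is a straightforward exhaustive-search claim, so the plan is to enumerate every allocation, test each one for EFx feasibility, and retain the feasible allocation of minimum cost. An allocation is exactly a function $f:M\to A$ assigning each item to one agent, so there are precisely $n^m$ allocations to consider, and these can be generated one after another (e.g.\ by counting in base $n$) without ever storing them all simultaneously.

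For a fixed allocation $\mathcal{X}=(X_1,\dots,X_n)$ I would first precompute, in a single pass over the items, the bundle values $v(X_i)$ together with the cheapest item value $\min_{x\in X_i} v(x)$ of each bundle; this costs $\mathcal{O}(m)$ time. Because all agents share the same additive valuation $v$, the EFx condition collapses to a per-pair comparison: agent $a_i$ does not strongly envy $a_j$ exactly when $v(X_i)\ge v(X_j)-\min_{x\in X_j} v(x)$, since the largest attainable value of $v(X_j\setminus\{x\})$ is obtained by deleting the cheapest item of $X_j$. Checking this inequality for all $\mathcal{O}(n^2)$ ordered pairs certifies or refutes EFx in $\mathcal{O}(n^2)$ time once the per-bundle quantities are in hand. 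If the allocation is feasible, I would evaluate its cost $c(\mathcal{X})=\sum_{a_i\in A} c_i(X_i)$ and compare it against the best value seen so far, updating the stored optimum accordingly.

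For the running time, each of the $n^m$ allocations is processed in $\mathcal{O}(m+n^2)$ time, giving a total of $\mathcal{O}\bigl((m+n^2)\cdot n^m\bigr)=\mathcal{O}(m\,n^m+n^{m+2})$. Both terms are dominated by $m\,n^{2m}$ (the first since $n^m\le n^{2m}$, the second since $m+2\le 2m$ for $m\ge 2$), so the bound simplifies to $\mathcal{O}(m\cdot n^{2m})$ as claimed, with the handful of tiny cases ($m\le 1$ or $n\le 1$) being trivial. Correctness is immediate: the procedure inspects every allocation, discards precisely the EFx-infeasible ones, and returns a minimum-cost survivor, hence an optimal solution whenever a feasible one exists.

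I do not expect any genuine mathematical obstacle here, as the theorem is simply the naive brute force. The only two points that require mild care are (i) invoking the cheapest-item shortcut so that the feasibility test does not incur an extra factor of $m$ per agent pair, and (ii) verifying that the deliberately loose target bound $\mathcal{O}(m\cdot n^{2m})$ indeed absorbs the natural estimate $\mathcal{O}(m\,n^m+n^{m+2})$, which it comfortably does for all $m\ge 2$.
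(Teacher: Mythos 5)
Your proposal is correct and follows essentially the same route as the paper: enumerate all $n^m$ allocations, test each for EFx feasibility using the fact that under a common additive valuation only the removal of the least-valued item of each bundle needs checking, and return the cheapest feasible allocation. Your per-allocation check is in fact slightly sharper ($\mathcal{O}(m+n^2)$ versus the paper's $\mathcal{O}(m\cdot n^2)$), and you justify the slack in the stated bound $\mathcal{O}(m\cdot n^{2m})$ more explicitly, but these are refinements of the same brute-force argument.
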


\begin{proof}
	There are at most $n^m$ possible allocations of $m$ items to $n$ agents. 
	Each allocation can be checked for EFx feasibility in $\mathcal{O}(m\cdot n^2)$ time by verifying that every agent values her bundle at least as much as any other bundle with its least-valued item removed. 
	The allocation of minimum cost can then be returned.
	Hence, \minEfx\ can be solved in $\mathcal{O}(m \cdot n^{2m})$ time.
\end{proof}

From \Cref{bruteforce} we can deduce that

\begin{corollary}\label{m}
	\minEfx\ is polynomial-time solvable when the number of items $m$ is constant.
\end{corollary}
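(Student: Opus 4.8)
The plan is to derive the corollary directly from the running-time bound established in \Cref{bruteforce}. That theorem already supplies an algorithm computing an optimal solution to \minEfx\ in $\mathcal{O}(m \cdot n^{2m})$ time, so the only thing left is to argue that this bound collapses to a polynomial once $m$ is treated as a fixed constant rather than as part of the input.

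First I would fix an arbitrary constant $m = c$. Then the exponent $2m = 2c$ is itself a constant, so the factor $n^{2m}$ equals $n^{2c}$, a polynomial in the number of agents $n$ of fixed degree $2c$. Multiplying by the constant $m = c$ does not change the order of growth, and the running time is $\mathcal{O}(n^{2c})$.

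Next I would confirm that this is polynomial in the \emph{size of the input} and not merely in $n$. The input consists of the $n$ agents together with the shared valuation function $v$ and the $n$ agent-specific cost functions $c_i$. Each such function is specified over $\mathcal{P}(M)$, which has $2^m = 2^c$ subsets -- a constant number when $m$ is fixed -- so the encoding of each function is polynomial in the input length, and the total input size is at least linear in $n$. Consequently $n^{2c}$ is bounded by a polynomial in the input length, and the algorithm of \Cref{bruteforce} runs in polynomial time.

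I do not expect a genuine obstacle: the entire difficulty has been absorbed by \Cref{bruteforce}, and the corollary amounts to substituting a constant for $m$. The only point requiring care is the bookkeeping of the previous paragraph, namely verifying that fixing $m$ keeps the representation of the (possibly large-valued) valuation and cost functions polynomial in the overall input, so that ``polynomial in $n$'' coincides with ``polynomial in the input size.''
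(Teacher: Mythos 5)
Your proposal is correct and follows exactly the paper's route: the paper derives the corollary directly from \Cref{bruteforce} by observing that $\mathcal{O}(m \cdot n^{2m})$ becomes polynomial in $n$ (and hence in the input size) once $m$ is a constant. Your additional check that the input encoding keeps ``polynomial in $n$'' equivalent to ``polynomial in the input size'' is sound bookkeeping but does not change the argument.
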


We now analyze the case $n \ge m$.

\begin{observation}\label{thm:agents}
	If $n \ge m$ and some item is positively valued by at least $m$ agents, then this item must be allocated alone in every EFx allocation.
\end{observation}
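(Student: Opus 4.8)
The plan is to argue by contradiction via a pigeonhole count on empty bundles. Suppose $\mathcal{X} = (X_1,\dots,X_n)$ is an EFx allocation in which the distinguished item $x$ is \emph{not} allocated alone, i.e.\ $x \in X_j$ for some bundle with $|X_j| \ge 2$. My goal is then to exhibit an agent who strongly envies $a_j$, contradicting feasibility. The whole argument hinges on locating an agent who simultaneously (i) holds an empty bundle and (ii) values $x$ positively.

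First I would bound the number of nonempty bundles. Since $|M| = m$ and $X_j$ already consumes at least two items, the remaining at most $m-2$ items occupy at most $m-2$ further bundles; hence at most $m-1$ of the $n$ bundles are nonempty. Next, let $P$ denote the set of agents that value $x$ positively, so $|P| \ge m$ by hypothesis (which is consistent precisely because $n \ge m$). As at most $m-1$ agents hold a nonempty bundle, the number of agents of $P$ receiving the empty bundle is at least $|P| - (m-1) \ge m - (m-1) = 1$. Fix such an agent $a_i$, so that $X_i = \emptyset$ and $v_i(x) > 0$.

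Finally I would derive the contradiction from the EFx condition. Because $|X_j| \ge 2$, there is an item $z \in X_j$ with $z \ne x$. Applying EFx to $a_i$ against $a_j$ for this particular $z$ would require $v_i(X_i) \ge v_i(X_j \setminus \{z\})$. But $v_i(X_i) = v_i(\emptyset) = 0$, whereas $x \in X_j \setminus \{z\}$ forces $v_i(X_j \setminus \{z\}) \ge v_i(x) > 0$ by additivity and non-negativity of valuations. Thus $a_i$ strongly envies $a_j$, contradicting that $\mathcal{X}$ is EFx. Hence $|X_j| = 1$, i.e.\ $x$ is allocated alone.

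The hard part is the counting step, and it is also what explains why the hypothesis reads ``at least $m$ agents'' rather than merely $v(x) > 0$: I must guarantee that the empty-bundle agent produced by pigeonhole is one who \emph{actually} values $x$ positively. In the worst case the bundle sharing $x$ scatters the other $m-1$ items so that up to $m-1$ agents hold nonempty bundles, so fewer than $m$ positively-valuing agents would not suffice to force an empty bundle inside $P$. In the restricted model where all agents share a single valuation $v$, the hypothesis collapses to $v(x) > 0$ (then $P$ is either all of $A$ or empty), and the argument simplifies accordingly.
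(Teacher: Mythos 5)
Your proof is correct, and it takes the argument the paper intends: the paper states this as an unstarred observation with no written proof, precisely because the justification is the pigeonhole count you give (at most $m-1$ nonempty bundles once $x$ shares a bundle, hence some agent valuing $x$ positively is left empty-handed and strongly envies the holder of $x$). Your closing remark on why the hypothesis requires $m$ positively-valuing agents, and how it collapses to $v(x)>0$ under the paper's identical-valuation assumption, is also accurate.
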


\begin{theorem}\label{nm}
	If $n \geq m$ and every item is positively valued by at least $m$ agents, then \minEfx \ can be solved in polynomial time.
\end{theorem}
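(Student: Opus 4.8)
The plan is to first pin down exactly which allocations are EFx under the hypothesis, and then observe that minimizing cost over this restricted family is a bipartite assignment problem.

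First I would apply \Cref{thm:agents} to each item. Since $n \ge m$ and, by assumption, every item is positively valued by at least $m$ agents, the observation forces \emph{every} item to be allocated alone in any EFx allocation. Consequently, each EFx allocation consists solely of $m$ singleton bundles (one per item) together with $n-m$ empty bundles; in particular, every item is handed to a distinct agent. For the converse, I would check that any allocation of this shape is automatically EFx: if $X_j$ is empty there is nothing to verify, and if $X_j = \{y\}$ is a singleton then the only item to remove is $y$, giving $v_i(X_j \setminus \{y\}) = v_i(\emptyset) = 0 \le v_i(X_i)$ for every agent $a_i$. Hence the feasible (EFx) allocations are \emph{exactly} the injective assignments $\sigma$ of the $m$ items to the $n$ agents.

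Second, with the feasible set so described, minimizing $c(\mathcal{X}) = \sum_i c_i(X_i)$ becomes a minimum-cost assignment problem. An injective $\sigma$ incurs cost $\sum_{x_k \in M} c_{\sigma(x_k)}(\{x_k\}) + \sum_{a_i \notin \mathrm{range}(\sigma)} c_i(\emptyset)$. I would rewrite this as the constant $\sum_{a_i \in A} c_i(\emptyset)$ plus $\sum_{x_k \in M} \bigl(c_{\sigma(x_k)}(\{x_k\}) - c_{\sigma(x_k)}(\emptyset)\bigr)$, so that only the singleton and empty-bundle costs matter---these are $O(nm)$ values, each readable from the input in polynomial time. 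Minimizing the second sum over injective $\sigma$ is precisely a minimum-weight bipartite matching saturating the item side, with edge weight $w(i,k) = c_i(\{x_k\}) - c_i(\emptyset)$ between agent $a_i$ and item $x_k$. Since $n \ge m$, such a saturating matching exists, and it can be computed in polynomial time (e.g.\ via the Hungarian algorithm or a min-cost flow formulation), yielding the cost-minimal EFx allocation.

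The argument is largely routine; the only point requiring care is the bookkeeping of the empty-bundle costs for the $n-m$ unassigned agents, which is why I fold $\sum_i c_i(\emptyset)$ into a constant offset and use the \emph{difference} weights $c_i(\{x_k\}) - c_i(\emptyset)$. These weights may be negative, so I would invoke an assignment algorithm that handles arbitrary (signed) weights, or equivalently shift all weights by a common constant to make them nonnegative before running a min-cost flow.
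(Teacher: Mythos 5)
Your proposal is correct and follows essentially the same route as the paper: invoke \Cref{thm:agents} to force singleton bundles, then solve a minimum-cost bipartite matching between items and agents. The only difference is that the paper additionally prunes each item's candidate set to its $m$ lowest-cost agents to shrink the graph to size $\mathcal{O}(m^2)$ (a running-time optimization), whereas you match over all $n$ agents and carefully account for empty-bundle costs and possibly negative weights---both non-issues under the paper's additive cost model, but harmless.
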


\begin{proof} By \Cref{thm:agents}, the EFx constraints enforce a unique feasible partition in which each item is allocated to a distinct agent. 
	To minimize cost, we restrict each item $x_i$ to its $m$ lowest-cost agents $A_i$ with $|A_i|\le m$. 
	Suppose an optimal allocation $\mathcal{X}$ assigns $x_i$ to some agent $b \notin A_i$. 
	Since at most $m$ bundles of $\mathcal{X}$  are nonempty, at least one $a \in A_i$ receives no items.
	As $c_{a}(x_i) \le c_{b}(x_i)$, allocating $x_i$ to $a$ instead of $b$ yields another feasible allocation of no larger cost, and of strictly smaller cost if $c_{a}(x_i) < c_{b}(x_i)$. 
	We can thus restrict our attention to agents in $A_i$.
	
	We now construct a bipartite graph $G=(U,V,E)$ where $U$ is the set of $m$ items, $V$ the $n$ agents, and $(x_i,a)\in E$ has weight $c_{a}(x_i)$ whenever $a \in A_i$. 
	Agents without incident edges can be discarded. 
	A feasible allocation corresponds exactly to a minimum-cost matching of size $m$ in $G$. 
	Since each item contributes at most $m$ edges, $|E| \le m^2$ and $|U|+|V| = \mathcal{O}(m^2)$. 

A minimum-cost matching in a bipartite graph can be reduced to a minimum-cost flow instance~\citep{flow}. Using recent near-linear-time algorithms for min-cost flow, this can be solved in time $m^{1+o(1)}$ \citep{DBLP:journals/jacm/ChenKLPGS25}, where $m$ denotes the number of edges. Hence, when $n\ge m$, the problem is solvable in polynomial time.
\end{proof}

We next consider the general case.

\begin{reduction}\label{red:m}
	If an item is positively valued by more than $m$ agents, retain only its $m$ lowest-cost agents.
\end{reduction}

\begin{reduction}\label{red:m2}
	For every pair of items, if the set of agents who value at least one of them has size fewer than $m$, retain all these agents (say $m'<m$), and additionally retain the $m$ lowest-cost agents for this pair.
\end{reduction}

Applying \Cref{red:m} and \Cref{red:m2} yields the following.

\begin{lemma}\label{lem:m}
	Each instance $(A,M,v,c)$ of \minEfx\ can be reduced in polynomial time to an equivalent instance $(A',M,v,c)$ with at most $m^3$ agents.
\end{lemma}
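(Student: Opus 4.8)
The plan is to bound the number of agents that can possibly matter in any feasible, cost-minimal allocation, and to argue that the two reductions together keep only such agents while preserving the set of optimal solutions. The key structural fact is \Cref{thm:agents} (together with \Cref{thm:diff} and \Cref{threshold}): in an EFx allocation with identical valuations, bundles cannot differ by more than the largest item value, so only a limited pattern of assignments is EFx-feasible. In particular, whenever an item is positively valued by many agents it is forced to be handed out essentially on its own, which means that at most $m$ agents ever receive a nonempty bundle in a feasible allocation. I would first make this explicit: define an agent to be \emph{relevant} if it is retained by \Cref{red:m} for some item it positively values, or by \Cref{red:m2} for some pair, and show every other agent can be assumed to receive the empty bundle in some optimal solution.

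First I would prove that the reductions are sound, i.e. that applying them does not change whether the instance is a \texttt{yes}-instance and does not change the optimal cost. For soundness of \Cref{red:m}, I would use an exchange argument identical in spirit to the one in the proof of \Cref{nm}: if an optimal allocation assigns an item $x$ to a discarded agent $b$ (one not among the $m$ lowest-cost agents for $x$), then because at most $m$ bundles are nonempty, at least one of the $m$ retained low-cost agents $a$ is empty, and reassigning $x$ from $b$ to $a$ keeps the allocation EFx-feasible (moving an item to a previously empty bundle cannot create strong envy, and removing items from $b$'s bundle only weakens the envy others feel toward $b$) while not increasing cost since $c_a(x)\le c_b(x)$. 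For \Cref{red:m2}, the analogous argument handles the case where two items might need to be split across agents; retaining the union of valuing agents (when it is small) plus the $m$ cheapest agents for the pair guarantees that whatever optimal placement the pair requires can be replicated among the retained agents.

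Next I would count the retained agents. \Cref{red:m} keeps at most $m$ agents per item, contributing at most $m\cdot m = m^2$ agents over all items. \Cref{red:m2} keeps, for each of the $\binom{m}{2}=\mathcal{O}(m^2)$ pairs, at most $m'<m$ valuing agents or the $m$ cheapest agents, i.e. $\mathcal{O}(m)$ agents per pair, giving $\mathcal{O}(m^3)$ agents in total. Combining the two bounds and absorbing constants yields at most $m^3$ retained agents, which is the claimed bound; I would verify that the constant factors collapse into this clean form (or, if needed, state the bound as $\mathcal{O}(m^3)$ and argue the $m^3$ threshold suffices for an equivalent instance). The reduction is clearly polynomial-time, since it only inspects each item and each pair of items and sorts agents by cost.

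I expect the main obstacle to be the correctness of \Cref{red:m2}, specifically the exchange argument in the presence of items whose valuing-agent set is small but nonempty. One must be careful that discarding an agent who is the unique low-cost option for one pair does not inadvertently remove the only feasible cheap placement for a different overlapping pair; the argument must ensure that the retained set is simultaneously adequate for every pair, not merely pairwise. I would resolve this by showing that any single optimal allocation touches at most $m$ agents, so the relevant witnesses for all pairs can be realized simultaneously within the retained set, and that an empty retained agent of sufficiently low cost is always available to absorb a reassigned item without disturbing EFx feasibility.
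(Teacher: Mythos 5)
There is a genuine gap in your exchange argument: you never check the envy of the \emph{donor} agent $b$. When you reassign an item $x$ from $b$ to an empty retained agent $a$, you verify that nobody strongly envies $a$ (true, since $\{x\}$ is a singleton) and that others' envy toward $b$ only weakens (true), but removing $x$ \emph{lowers the value of $b$'s own bundle}, so $b$ itself may now strongly envy other bundles. The EFx constraint $v(X_b\setminus\{x\})\ge v(X_j\setminus\{y\})$ is strictly tighter than the one satisfied by $\mathcal{X}$, and nothing in your proposal rules out that it fails. This is not a technicality — it is exactly the difficulty the paper's proof is built around. The paper moves the \emph{entire} bundle $X'$ off of $b$ and then distinguishes two cases: either $b$, once emptied, strongly envies no bundle (then the move is safe, provided an empty retained agent exists), or $b$ would strongly envy some bundle $X^*$, which then must contain at least two items that $b$ values. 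In that second case, \Cref{red:m2} guarantees that $m$ agents were retained for the pairs inside $X^*$; since $b\notin A'$ holds a nonempty bundle and at most $m$ bundles are nonempty, pigeonhole yields an empty retained agent, and that agent (having identical valuations) already strongly envies $X^*$ in $\mathcal{X}$ itself — contradicting that $\mathcal{X}$ was EFx. So the second case simply cannot occur. Without this case analysis you cannot justify "reassigning $x$ keeps the allocation EFx-feasible."

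Relatedly, you misidentify the role of \Cref{red:m2}. You treat it as ensuring that a cheap placement for a pair of items "can be replicated among the retained agents" — a cost-adequacy condition — and accordingly your final paragraph worries about overlapping pairs and simultaneous low-cost witnesses. But the pair reduction is not there for cost at all; it exists to produce the pigeonhole contradiction above, i.e., to certify that a discarded agent can never be in the "would strongly envy when emptied" situation. Your counting of $|A'| = \mathcal{O}(m^3)$ and the polynomial running time are fine and match the paper, but the soundness core of the lemma — why discarding agents preserves the existence of an optimal EFx allocation — is missing, because the one step that can break EFx (the donor's own envy after losing items) is precisely the step you assert without proof.
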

%\todo{this holds for more general valuations - for identical valuations sets are either valuable or they are not (for all agents). But as EFx not actually known in this setting, maybe just add note with identical ranking?}
\begin{proof}
	Let $\mathcal{X}$ be a cost-optimal EFx allocation for the original instance.  
	Assume that some non-empty bundle $X'$ is assigned to an agent $b \notin A'$.
	
	\medskip\noindent
	\emph{Case 1.} Suppose that $b$ does not strongly envy any bundle when assigned nothing.  
	Then there exists an agent $a \in A'$, kept in the reduction for some item of $X'$, who must have received an empty bundle.
	Reassigning $X'$ to $a$ yields another feasible allocation, since no agent strongly envied $b$,	and the total cost does not increase.
	
	\medskip\noindent
	\emph{Case 2.} Suppose instead that $b$ would strongly envy some bundle if allocated nothing.  
	Then there exists a set $X^*$ of size at least two that $b$ values.  
	By \Cref{red:m} and \Cref{red:m2}, for each pair of items $x,y \in X^*$ we retained the $m$ cheapest agents who value it.  
	Since $b \notin A'$, all these $m$ positions are filled by other agents.  
	Thus, across the items of $X^*$ we retained at least $m$ distinct agents.  
	By the pigeonhole principle, at least one of these retained agents received an empty bundle in $\mathcal{X}$.  
	That agent would then envy $b$’s allocation $X'$, contradicting EFx feasibility.
	
	Therefore we can restrict our attention to agents in $A'$.  
	
	\medskip
	Finally, the size bound follows from the construction. Each single item contributes at most $m$ retained agents and each of the $\binom{m}{2}$ pairs contributes at most $m$ additional agents.
	%, so we save a total of at most $m^2 + m^3$ agents.  
	Hence $|A'| = O(m^3)$, and the reduction is computable in polynomial time.
\end{proof}

Note that this reduction also holds when agents have differing valuations, as long as they are additive. 
Furthermore applying the \citet{FT87} weight reduction twice yields an equivalent instance whose weights are bounded by $poly(m)$ (we refer to the appendix for a full proof), hence the kernel size is polynomial in the number of items.

\appendixproof{cor:pkm}{

We show how we can bound the encoding size of valuations and costs without changing feasibility or optimality.

\begin{theorem}[\citet{FT87}]\label{thm:FT}
	Let $w\in\mathbb{Z}^m$ and $K\in\mathbb{N}$. There is a strongly polynomial algorithm that outputs $w'\in\mathbb{Z}^m$ with $\|w'\|_\infty \le poly(m,\log K)$ such that for all $x\in\mathbb{Z}^m$ with $\|x\|_1\le K$,
	\[
	\mathrm{sign}\,\langle w,x\rangle \;=\; \mathrm{sign}\,\langle w',x\rangle.
	\]
\end{theorem}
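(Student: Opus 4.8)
The statement is the Frank--Tardos weight-reduction theorem, and the plan is to prove it through lattice-based \emph{simultaneous Diophantine approximation}. The only external tool I would invoke is the LLL algorithm in its simultaneous-approximation form: given $\alpha\in\mathbb{Q}^m$ and a precision $\epsilon\in(0,1)$, one can compute in polynomial time an integer $q$ with $1\le q\le 2^{m(m+1)/4}\epsilon^{-m}$ and a vector $p\in\mathbb{Z}^m$ with $\|q\alpha-p\|_\infty\le\epsilon$. Everything else is elementary; the craft lies entirely in how this approximation is applied and iterated.

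First I would isolate the conceptual core as a single rounding step. Fix $\epsilon<\tfrac{1}{2K}$, normalize $\hat w=w/\|w\|_\infty$ (rational, with $\|\hat w\|_\infty=1$), and apply simultaneous approximation to get $q$ and $p$ with $r:=q\hat w-p$ satisfying $\|r\|_\infty\le\epsilon$. For every $x\in\mathbb{Z}^m$ with $\|x\|_1\le K$ one has $\langle p,x\rangle=\tfrac{q}{\|w\|_\infty}\langle w,x\rangle-\langle r,x\rangle$ and $|\langle r,x\rangle|\le\epsilon K<\tfrac12$. Two cases then drive the whole argument. If $\langle p,x\rangle\neq 0$, then $\langle p,x\rangle$ is a nonzero integer, so $|\langle p,x\rangle|\ge 1>|\langle r,x\rangle|$ and $\mathrm{sign}\langle p,x\rangle=\mathrm{sign}\langle w,x\rangle$ (using $q>0$); thus $p$ already resolves these directions. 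If instead $\langle p,x\rangle=0$, then the \emph{integer} residual vector $w':=qw-\|w\|_\infty p$ satisfies $\langle w',x\rangle=q\langle w,x\rangle$, so $w'$ reproduces the sign of $\langle w,x\rangle$ exactly on precisely the directions that $p$ failed to resolve. The two facts I rely on here are purely the integrality of $p,x,w'$ and the unit gap $|\langle w,x\rangle|\ge 1$ valid for integral $w$.

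I would then assemble the output recursively with a dominance trick: set $\bar w=\Lambda p+\bar w'$, where $\bar w'$ is the vector returned by the recursion and $\Lambda:=\|\bar w'\|_\infty K+1$. On the directions with $\langle p,x\rangle\neq 0$ the term $\Lambda p$ outweighs $\bar w'$ and fixes the sign, while on the directions with $\langle p,x\rangle=0$ the term $\Lambda p$ contributes nothing and $\bar w'$ carries the sign. This is exactly where the main obstacle lies. The naive residual $w'$ only shrinks in magnitude by a factor $\approx 1/\epsilon$, so terminating on magnitude would cost $\Theta(\log\|w\|_\infty)$ levels and leave the final size depending on $\|w\|_\infty$, whereas the theorem demands a bound in $m$ and $K$ alone. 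To defeat this I would make the recursion reduce \emph{dimension} rather than magnitude: the only directions $\bar w'$ must still handle are those with $\langle p,x\rangle=0$, i.e.\ $x\in p^\perp$, so I recurse on the restriction of the problem to the sublattice $\mathbb{Z}^m\cap p^\perp$, whose dimension is strictly smaller. This caps the recursion depth at $m$, independently of $\|w\|_\infty$.

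The delicate part of the write-up is the bookkeeping for this dimension drop: after restricting to $p^\perp$ one must pass to the induced sublattice, re-establish a unit-type gap for nonzero inner products there (so the integrality argument of the core lemma still applies) and track how short $x$ of the original ball behave in the subspace. Granting this, the size estimate telescopes cleanly: each of the at most $m$ levels contributes a factor $\|p\|_\infty\le q+1=2^{\mathcal{O}(m^2)}K^{\mathcal{O}(m)}$ (independent of the current magnitude precisely because we normalize before approximating), and each multiplier contributes $\Lambda\le\|\bar w'\|_\infty K+1$. Multiplying over the $m$ levels yields $\|\bar w\|_\infty\le 2^{\mathcal{O}(m^3)}K^{\mathcal{O}(m^2)}$, whose encoding length is $\mathrm{poly}(m,\log K)$ as claimed, and the whole procedure runs in polynomial time since it performs $\mathcal{O}(m)$ LLL calls on rationals of size polynomial in the input. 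I expect the magnitude-to-dimension reduction, together with the attendant lattice-gap argument, to be the crux; the sign analysis and the final telescoping are routine once that reduction is in place.
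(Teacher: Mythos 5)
A preliminary remark: the paper contains no proof of \Cref{thm:FT} at all---it is imported verbatim from \citet{FT87} as a black box inside the kernelization argument---so your attempt must be measured against the original Frank--Tardos proof. Its architecture you have in fact reconstructed faithfully: the LLL-based simultaneous-Diophantine-approximation rounding of the normalized vector $\hat w = w/\|w\|_\infty$, the case split on $\langle p,x\rangle \neq 0$ versus $\langle p,x\rangle = 0$ (with the integrality-plus-unit-gap argument in the first case), the dominance combination $\bar w = \Lambda p + \bar w'$ with $\Lambda = \|\bar w'\|_\infty K + 1$, a recursion of depth at most $m$, and the telescoping size bound $2^{\mathcal{O}(m^3)}K^{\mathcal{O}(m^2)}$ are all exactly the published argument.

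The genuine gap is the mechanism you chose for bounding the recursion depth, and you correctly identify it as the crux without resolving it. Recursing on the sublattice $\mathbb{Z}^m \cap p^\perp$ requires fixing an integral basis $B$ of that sublattice and translating the constraint $\|x\|_1 \le K$ into a bound $K'$ on the coefficient vector $y$ with $x = By$; any such translation inflates $K'$ by a factor polynomial in $\|p\|_\infty \le q+1 = 2^{\mathcal{O}(m^2)}K^{\mathcal{O}(m)}$, so $K_{i+1} \approx K_i^{\mathcal{O}(m)}$ and after up to $m$ levels the effective ball radius is $K^{m^{\Theta(m)}}$---which destroys the claimed $\mathrm{poly}(m,\log K)$ bound rather than merely complicating it (and the unit gap for the restricted weight vector $B^{\top}w'$ would need its own argument besides). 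Frank and Tardos avoid the sublattice entirely via the normalization you already perform: since $\|\hat w\|_\infty = 1$ is attained at some coordinate $i$, the number $q\hat w_i = \pm q$ is an integer, and because $|q\hat w_i - p_i| \le \epsilon < \tfrac12$ forces $p_i = q\hat w_i$, the residual $r = q\hat w - p$ satisfies $r_i = 0$; likewise every coordinate of $\hat w$ that was already zero stays zero in $r$. One therefore recurses on $r$ in the \emph{same} ambient $\mathbb{Z}^m$ with the \emph{same} $K$, the support shrinks by at least one coordinate per round, the depth is at most $m$, and your telescoping then goes through verbatim. With that one substitution---support reduction in place of sublattice restriction---your proof is the standard one.
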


We apply \Cref{thm:FT} to the common valuation vector and to the agents’ cost vectors.

\emph{Valuations.} Since valuations are identical and additive, there exists a vector 
$\mathbf{v} \in \mathbb{Z}^m$ such that for every bundle
$S \subseteq M$, we have $v(S) = \langle \mathbf{v}, \chi_S \rangle$, 
where $\chi_S \in \{0,1\}^m$ is the characteristic vector of $S$.
EFx feasibility uses only comparisons of the form
\[
\langle \mathbf{v},\chi_{X_i}\rangle \;\ge\; \langle \mathbf{v},\chi_{X_j}-\mathbf{e}_g\rangle
\quad\text{for } a_i,a_j\in A,\; g\in X_j,
\]
and (trivially) nonnegativity on empty bundles. Every vector on the right-hand side is an integer vector with $\ell_1$-norm at most $m$.
Apply \Cref{thm:FT} to each $\mathbf{v}$ with $K:=m$ to obtain $\mathbf{v}'$.
Then for all $S,T\subseteq M$ with $|S|,|T|\le m$ and all $g\in T$,
\[
\mathrm{sign}\,\big(\langle \mathbf{v},\chi_S\rangle-\langle \mathbf{v},\chi_{T}-\mathbf{e}_g\rangle\big)
\;=\;
\mathrm{sign}\,\big(\langle \mathbf{v}',\chi_S\rangle-\langle \mathbf{v}',\chi_{T}-\mathbf{e}_g\rangle\big),
\]
so every EFx inequality is preserved agentwise. Hence an allocation is EFx under $\{\mathbf{v}\}$ if and only if it is EFx under $\{\mathbf{v}'\}$.

\emph{Costs.}
Each agent $a_i$ has an additive cost vector $\mathbf{c}_i \in \mathbb{Z}^m$,
and the total cost of an allocation $X$ is 
$\sum_{a_i\in A} \langle \mathbf{c}_i, \chi_{X_i}\rangle$. Feasibility in the decision variant of \minEfx\ is of the form
$\sum_{a_i\in A}\langle \mathbf{c}_i,\chi_{X_i}\rangle \le k$,
which can be encoded as
\[
\Big\langle (\mathbf{c}_1,\ldots,\mathbf{c}_{|A|},-k),\;
(\chi_{X_1},\ldots,\chi_{X_{|A|}},1)\Big\rangle \le 0.
\]
whose $\ell_1$-norm is at most $m\cdot |A|+1$.
Apply \Cref{thm:FT} to each $(\mathbf{c}_i,-k_i)$ with 
$K:=m+1$ (or to their concatenation when considering the global sum) to obtain
compressed vectors $(\mathbf{c}_i',-k_i')$; feasibility is preserved.

If we compare two feasible allocations $X,Y$, the sign of
\[
\sum_{a_i \in A} \langle \mathbf{c}_i,\chi_{X_i}\rangle
-\sum_{a_i \in A} \langle \mathbf{c}_i,\chi_{Y_i}\rangle
=\sum_{a_i \in A} \langle \mathbf{c}_i,\chi_{X_i}-\chi_{Y_i}\rangle
\]
must be preserved.  
Since $\|\chi_{X_i}-\chi_{Y_i}\|_1\le 2m$ for each $a_i$, apply \Cref{thm:FT} to each $\mathbf{c}_i$
with $K:=2m$ to obtain $\mathbf{c}_i'$ ensuring
\[
\mathrm{sign}\big\langle \mathbf{c}_i,\chi_{X_i}-\chi_{Y_i}\big\rangle
=
\mathrm{sign}\big\langle \mathbf{c}_i',\chi_{X_i}-\chi_{Y_i}\big\rangle,
\]
and hence total cost comparisons are unchanged.

Thus, replacing $\mathbf{v}, \{\mathbf{c}_i\}, k$ by 
$\mathbf{v}', \{\mathbf{c}_i'\}, k'$ yields an equivalent instance 
whose numerical values are bounded by $poly(m)$, 
without affecting EFx or feasibility. 
Hence the kernel size is polynomial in the number of items.
}

\begin{corollary}[$\star$]\label{cor:pkm}
	\minEfx\ admits a kernel of size polynomial in $m$.
\end{corollary}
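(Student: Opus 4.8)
The plan is to combine the agent--reduction of \Cref{lem:m} with a numerical compression of the valuation and cost data via the Frank--Tardos theorem (\Cref{thm:FT}). \Cref{lem:m} already yields an equivalent instance with at most $m^3$ agents, so the \emph{combinatorial} size of the kernel is polynomial in $m$; what remains is to bound the bit-length of the numbers, since a priori the valuations $v(x_i)$ and the costs $c_i(x_j)$ could be astronomically large. The target is an equivalent instance in which every number has magnitude $\mathrm{poly}(m)$, hence encoding length $O(\log m)$; together with $O(m^3)$ agents and $m$ items this gives total size polynomial in $m$.

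First I would encode the shared additive valuation as a vector $\mathbf{v}\in\mathbb{Z}^m$ with $v(S)=\langle\mathbf{v},\chi_S\rangle$. The only way valuations enter feasibility is through the EFx inequalities $\langle\mathbf{v},\chi_{X_i}\rangle\ge\langle\mathbf{v},\chi_{X_j}-\mathbf{e}_g\rangle$; each such test is the sign of $\langle\mathbf{v},\chi_{X_i}-\chi_{X_j}+\mathbf{e}_g\rangle$, and since the bundles partition $M$ the test vector has $\ell_1$-norm at most $m$. Applying \Cref{thm:FT} to $\mathbf{v}$ with $K:=m$ produces $\mathbf{v}'$ of magnitude $\mathrm{poly}(m)$ agreeing in sign on every such comparison, so an allocation is EFx under $\mathbf{v}'$ exactly when it is under $\mathbf{v}$.

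Next I would compress the costs, and this is the step I expect to be the main obstacle. The objective is a \emph{sum} of per-agent inner products $\sum_{a_i\in A}\langle\mathbf{c}_i,\chi_{X_i}\rangle$, whereas \Cref{thm:FT} preserves the sign of a single inner product; preserving the sign of each summand separately does \emph{not} preserve the sign of the total. The clean fix is to stack the cost vectors into one concatenated vector $\mathbf{c}=(\mathbf{c}_1,\dots,\mathbf{c}_{|A|})\in\mathbb{Z}^{m|A|}$ and apply \Cref{thm:FT} to $\mathbf{c}$ as a whole. For any two feasible allocations $X,Y$, the relevant quantity is the sign of $\langle\mathbf{c},\,(\chi_{X_1}-\chi_{Y_1},\dots,\chi_{X_{|A|}}-\chi_{Y_{|A|}})\rangle$; since both $X$ and $Y$ partition the $m$ items, this test vector has $\ell_1$-norm at most $2m$. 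Taking $K:=2m$, the theorem yields a compressed $\mathbf{c}'$ of magnitude $\mathrm{poly}(m|A|)=\mathrm{poly}(m)$ (using $|A|\le m^3$) that preserves the sign of every pairwise total-cost comparison; the same concatenation, with $-k$ appended, handles the decision-variant test $\sum_i\langle\mathbf{c}_i,\chi_{X_i}\rangle\le k$.

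Finally I would assemble the pieces: the compressed instance $(A',M,\mathbf{v}',\mathbf{c}')$ has the same set of EFx-feasible allocations and the same ordering of their total costs, so it is equivalent and its optimum corresponds to the original optimum. All numbers are now $\mathrm{poly}(m)$, there are $O(m^3)$ agents and $m$ items, and both \Cref{lem:m} and \Cref{thm:FT} run in (strongly) polynomial time, so the whole reduction is a polynomial kernel in $m$. Beyond routine bookkeeping, the only genuine care needed is the concatenation argument for the cost objective and the correct $\ell_1$-norm bounds ($m$ for the EFx tests, $2m$ for cost differences) that make \Cref{thm:FT} applicable.
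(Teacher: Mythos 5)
Your proposal follows the paper's route: reduce to $O(m^3)$ agents via \Cref{lem:m}, then compress the numerical data with the Frank--Tardos theorem (\Cref{thm:FT}), applying it to the shared valuation vector with $K=m$ (EFx test vectors $\chi_{X_i}-\chi_{X_j}+\mathbf{e}_g$ over disjoint bundles have $\ell_1$-norm at most $m$) and to the cost data so that all magnitudes become $\mathrm{poly}(m)$, giving a kernel with $O(m^3)$ agents, $m$ items, and numbers of polynomial encoding length.

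The one place where you genuinely diverge is the cost-compression step, and there your version is the tighter one. The paper's appendix applies \Cref{thm:FT} to each cost vector $\mathbf{c}_i$ \emph{separately} with $K:=2m$ and then asserts that ``total cost comparisons are unchanged''; as you correctly observe, preserving $\mathrm{sign}\langle \mathbf{c}_i,\chi_{X_i}-\chi_{Y_i}\rangle$ for every agent individually does not preserve the sign of the sum $\sum_{a_i\in A}\langle \mathbf{c}_i,\chi_{X_i}-\chi_{Y_i}\rangle$, so that step, read literally, is a non sequitur. The paper only gestures at the correct fix in a parenthetical (``or to their concatenation when considering the global sum''), whereas you carry it out explicitly: apply \Cref{thm:FT} once to the stacked vector $(\mathbf{c}_1,\dots,\mathbf{c}_{|A|})\in\mathbb{Z}^{m|A|}$, note that for two allocations the test vector $(\chi_{X_1}-\chi_{Y_1},\dots,\chi_{X_{|A|}}-\chi_{Y_{|A|}})$ has $\ell_1$-norm at most $2m$ since each allocation partitions $M$, and use $|A|\le m^3$ so the compressed entries remain of size $\mathrm{poly}(m)$; appending $-k$ handles the decision threshold the same way. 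This preserves the full ordering of total costs among feasible allocations, which is exactly what equivalence of the optimization instance requires. In short: your proof is correct, takes the same overall approach, and on the one delicate point it repairs a sloppiness in the paper's own write-up.
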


Next we prove that when the item valuations are bounded, the \minEfx \ problem can be solved in polynomial time for a fixed number of agents. This finding is particularly relevant to real-world scenarios like divorce settlements or inheritance disputes, where the number of agents is often small. It also complements our later result on the $W[1]$-hardness of the problem with respect to the number of agents (see \Cref{w1}).

\begin{theorem}\label{thm:polybound-cost}
	 \minEfx \ can be solved in $\mathcal{O}(n(m+1)\cdot (v(M)+1)^n)$ time.
\end{theorem}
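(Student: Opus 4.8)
The plan is to solve \minEfx\ by a dynamic program over the vector of bundle values, exploiting that all agents share a single additive valuation $v$ and that, for this variant, the objective decomposes additively over items via the per-agent item costs $c_i(\cdot)$. I would index a DP cell by a \emph{value profile} $(u_1,\dots,u_n)$, where $u_i$ records the total value $v(X_i)$ currently assigned to agent $a_i$; since each $u_i\in\{0,1,\dots,v(M)\}$, there are at most $(v(M)+1)^n$ cells, and in each I store the minimum cost of any partial allocation realizing that profile. Items are processed one at a time in non-increasing order of value, and the transition for an item $x$ considers giving it to each of the $n$ agents, updating a single coordinate $u_j\mapsto u_j+v(x)$ and adding $c_j(x)$ to the stored cost. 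This is $\mathcal{O}(n)$ work per cell per item across $m+1$ stages, matching the claimed $\mathcal{O}\!\big(n(m+1)(v(M)+1)^n\big)$ bound; the answer is the smallest cost among final cells corresponding to EFx allocations.

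The step I expect to be the main obstacle is enforcing EFx using only the value profile, because the EFx test $v(X_i)\ge v(X_j)-\min_{x\in X_j}v(x)$ depends on the \emph{smallest} item of each bundle, which a value profile discards. The structural fact that makes this tractable is that, by processing items in non-increasing order, the last item placed in any bundle is automatically its cheapest; writing $p_j:=v(X_j)-\min_{x\in X_j}v(x)$ for the value of bundle $j$ immediately before its final item, the allocation is EFx exactly when $\max_j p_j\le\min_i v(X_i)$. Moreover, since the value of a bundle just before each successive addition only grows, the binding instance of this inequality is always the last addition to each bundle, so it suffices to test, at the moment an item is placed with $a_j$, that the pre-placement value of $a_j$ does not exceed what will be the minimum final bundle value. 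I would carry this out so that transitions violating the condition are simply discarded, coupling feasibility with the cost recursion.

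The delicate point in the previous paragraph is that the correct comparison is against the \emph{final} minimum bundle value, a global quantity unknown while the profile is still being built. Here I would lean on \Cref{threshold}, which pins the minimum admissible bundle value from below, together with \Cref{thm:diff}, which shows all bundle values in an EFx allocation lie within one maximum item value of each other: these two bounds confine every admissible profile to a narrow band and let me treat the eventual minimum as a fixed threshold rather than a free variable, so the incremental check on $p_j$ can be resolved locally without enlarging the state. Finally I would establish correctness by a two-way exchange argument showing that the DP enumerates exactly the EFx allocations: every EFx allocation survives all transition checks when its items are inserted in non-increasing order, by the $\max_j p_j\le\min_i v(X_i)$ characterization, and conversely every surviving run yields an allocation satisfying that same inequality and hence EFx. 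Taking the minimum stored cost over admissible final profiles then returns a cost-optimal EFx allocation within the stated time.
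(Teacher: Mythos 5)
Your DP skeleton---minimum cost indexed by value profiles $(u_1,\dots,u_n)$, items inserted in non-increasing order of value so that the last item placed in a bundle is its cheapest, and a feasibility check performed at each insertion---is exactly the paper's primary table (\Cref{thm:dppoly-cost}), and your running-time accounting matches. The genuine gap is in your third paragraph, where you claim that \Cref{threshold} and \Cref{thm:diff} let you ``treat the eventual minimum as a fixed threshold.'' They do not. \Cref{threshold} is only a lower bound valid for \emph{all} EFx allocations; the actual minimum bundle value of the particular allocation being built---in particular the cost-optimal one---can be strictly larger, and the EFx test $p_j \le \min_i v(X_i)$ must be made against that actual minimum. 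Checking $p_j \le \mathrm{threshold}$ is sound but incomplete: it can reject every EFx allocation. Concretely, take $n=2$ and items of values $10,10,1,1$. The bound of \Cref{threshold} is $(22-10)/2=6$, yet the only EFx allocations pair one item of value $10$ with one of value $1$ in each bundle, giving $p_1=p_2=10>6$; your check discards all of them, so the DP returns nothing although an optimum exists. Replacing the threshold by an upper bound (e.g.\ adding the maximum item value via \Cref{thm:diff}) fails in the other direction: the check becomes unsound and admits non-EFx profiles. So the comparison point genuinely varies with the allocation under construction and cannot be hard-wired as a fixed number into a local check.

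This is precisely the difficulty on which the paper spends its machinery: it introduces a second, feasibility-only table $D'$ (\Cref{thm:dp2poly}), computed over the items in \emph{ascending} order, where $D'[i,\mathbf{y}]=\top$ iff the $i$ smallest items can be distributed so that agent $a_j$ gains at least $y_j$ for every $j$. When item $x_i$ is given to agent $a_j$ whose pre-insertion value is $y_j$, the primary DP queries $D'[m-i,\mathbf{z}]$ with $z_{j'}=\max(y_j-y_{j'},0)$, i.e.\ it asks whether the remaining items can lift every other agent up to $y_j$. This is an \emph{extendability oracle}, not a fixed numeric bound, and it is the ingredient your proposal is missing. To repair your argument you either need such a second table, or you can salvage the fixed-threshold idea by additionally guessing the final minimum bundle value $\mu\in\{0,\dots,v(M)\}$, checking $p_j\le\mu$ at each insertion and $\min_i u_i\ge\mu$ at the end; that variant is correct but incurs an extra factor of $v(M)+1$ and therefore misses the stated running time.
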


First, we construct a dynamic programming table $D'$ that will serve as an oracle for our main algorithm.

\begin{definition}
	The \emph{secondary DP table} of an instance $I := (A, M, v, c)$ with $m = |M|$ and $n=|A| $ \ is a Boolean table $D'$ with entries
	\begin{align*}
		D'[i,\mathbf{y}] \in \{ \top, \bot\}
	\end{align*}
	where $i \in \{0 \dots, m\}$ and $\mathbf{y} \in \{0,\dots, v(M)\}^n$, with $\mathbf{y} = (y_1,\dots, y_n)$.
	We interpret the table as follows. An entry $D'[i, \mathbf{y}]$ in the table is true if and only if there is an assignment $(X_1, \dots, X_n)$ of the first $i$ items (ordered by \emph{ascending} valuations) such that the value of agent bundle $v(X_j)$ is at least $y_j$ for all $a_j \in A$.
\end{definition}

\begin{lemma}[$\star$]\label{thm:dp2poly}
	Given an instance $I = (A, M, v, c)$ of \minEfx \ the secondary dynamic programming table $D'$ of $I$ can be computed in $\mathcal{O}(n(m+1)\cdot (v(M)+1)^n)$ time.
\end{lemma}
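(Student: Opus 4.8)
The plan is to fill $D'$ by a straightforward dynamic program that processes the items one at a time in the fixed ascending order $x_{(1)},\dots,x_{(m)}$, deciding for each item which agent receives it. First I would handle the base case $D'[0,\mathbf{y}]$, which should be $\top$ exactly when $\mathbf{y}=\mathbf{0}$: with no items allocated every bundle has value $0$, so the only threshold vector that can be met is the all-zero one. For the inductive step I would observe that in any assignment of the first $i$ items realizing the thresholds $\mathbf{y}$, item $x_{(i)}$ lies in exactly one bundle $X_j$; deleting it leaves an assignment of the first $i-1$ items in which $X_j$ has value $v(X_j)-v(x_{(i)})$ and every other bundle is unchanged. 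Conversely, any assignment of the first $i-1$ items can be extended by placing $x_{(i)}$ into some bundle. This gives the recurrence
\[
D'[i,\mathbf{y}] \;=\; \bigvee_{j=1}^{n} D'\bigl[\,i-1,\ \mathbf{y}^{(j)}\bigr],
\]
where $\mathbf{y}^{(j)}$ agrees with $\mathbf{y}$ in every coordinate except the $j$-th, which is replaced by $\max\{0,\ y_j-v(x_{(i)})\}$.

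Correctness I would establish by induction on $i$, showing that the recurrence exactly matches the intended semantics of $D'$. The only subtle point is the clamping of the $j$-th coordinate to $0$: since every bundle value is nonnegative, a target $y_j-v(x_{(i)})\le 0$ is automatically satisfied by any partial assignment and is therefore equivalent to the target $0$; moreover, clamping is what keeps every index within the admissible range $\{0,\dots,v(M)\}$, so the table lookups on the right-hand side are always well defined. The forward direction (a realizing assignment for $D'[i,\mathbf{y}]$ induces one for some predecessor) and the backward direction (extending a predecessor's assignment by $x_{(i)}$) then follow directly from the observation above.

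For the running time I would count that $D'$ has $(m+1)\cdot(v(M)+1)^n$ entries, and each entry is computed as an OR over the $n$ choices of recipient agent. The main point requiring care is to spend only $\mathcal{O}(n)$ per entry rather than $\mathcal{O}(n^2)$: each predecessor index $\mathbf{y}^{(j)}$ differs from $\mathbf{y}$ in a single coordinate, so after computing the flattened address of $\mathbf{y}$ once, I can obtain each of the $n$ predecessor addresses in $\mathcal{O}(1)$ by subtracting the precomputed stride of coordinate $j$ (adjusted for the clamp). Hence every entry is filled in $\mathcal{O}(n)$ time, and summing over all entries yields the claimed $\mathcal{O}\bigl(n(m+1)\cdot(v(M)+1)^n\bigr)$ bound.

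I do not expect a genuine obstacle here, as this is a routine monotone DP over a bounded value space; the two places to be careful are the clamping argument in the correctness proof and the $\mathcal{O}(1)$ predecessor addressing needed to match the stated time bound exactly.
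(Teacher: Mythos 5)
Your proposal is correct and takes essentially the same approach as the paper: the same Boolean table over threshold vectors, filled item by item in ascending value order, with the same two-direction induction on $i$ (delete $x_{(i)}$ from its bundle / extend a partial assignment by $x_{(i)}$) and the same entry-count-times-$n$ running-time accounting. The only cosmetic difference is that you maintain the ``at least'' (downward-closed) semantics by clamping the decremented coordinate to $0$ inside the recurrence, whereas the paper keeps an identity transition and performs an explicit downward-closure pass ($D'[i,\mathbf{z}]=\top$ for all $\mathbf{z}\le\mathbf{y}$); the two mechanisms produce the same table, and your version arguably makes the $\mathcal{O}\bigl(n(m+1)(v(M)+1)^n\bigr)$ bound more transparent.
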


\appendixproof{thm:dp2poly}{
	We show how to fill the table and compute the entries efficiently.
	
	\bigskip\noindent
	\textbf{Construction:} We start by initializing the table as follows:
	\begin{align*}
		D'[i,\mathbf{y}] = \begin{cases}
			\top, & \text{if } \mathbf{y} = (0, \dots, 0) \text{ and } i = 0\\
			\bot, & \text{otherwise}
		\end{cases}
	\end{align*}
	We arrange the items in order of increasing value.
	For $i > 0$, we calculate all entries $D'[i,\cdot]$ as follows, assuming that all entries $D'[i-1,\cdot]$ have already been computed. We begin with the largest entries, setting $D'[i, \mathbf{y}] $ to \texttt{true} if there exists a $D'[i-1, \mathbf{y'}] $ that is true and meets one of the following conditions: (1) \(\mathbf{y} = \mathbf{y'}\) or (2) \( y'_j = y_j - v(x_i) \) for some \( j \in [n] \), while \( y'_{j'} = y_{j'} \) for all other \( j' \neq j \), meaning that we adjust the value of one component of the vector by subtracting the value of item \( x_i \).

	If \( D'[i,\mathbf{y}] \) is set to \texttt{true}, then also set \( D'[i,\mathbf{z}] = \top \) for all \(\mathbf{z}\) where \( \mathbf{z} \leq \mathbf{y} \). Here, \( \mathbf{z} \leq \mathbf{y} \) means that every entry \( z_j \) in \(\mathbf{z}\) satisfies \( z_j \leq y_j \) in the corresponding entry of \(\mathbf{y}\). Each entry \( y_j \) in the vector \(\mathbf{y}\) represents the sum allocated to \( a_j \) considering items up to \( i \). Since \( z_j \leq y_j \), \( D'[i,\mathbf{z}] \) is set to \texttt{true} because there exists an allocation up to item \( x_i \) where \( a_j \) has a value of at least \( z_j \).
	
	\smallskip\noindent
	\textbf{Running time:} In each row we compute a maximum of $(v(M)+1)^n$ entries. Consequently, checking if there exists a $D'[i-1,\mathbf{y'}]$ set to \texttt{true} that adds up to $\mathbf{y}$ can be done in $\mathcal{O}((v(M)+1)^n)$ time. Since we do this for every item in $M$, we can fill the whole table in $\mathcal{O}((m+1)\cdot (v(M)+1)^n)$.
	
	\smallskip\noindent \textbf{Correctness:}
	We claim that each entry $D'[i, \mathbf{y}]$ is set to \texttt{true} if and only if there exists an assignment $(X_1, \dots, X_n)$ such that the value of an agent bundle $v(X_j)$ considering items up to $i$ sums up to at least $y_j$ for all $j \in [n]$. We prove our claim via induction.
	
	\smallskip
	
	Initially note that for $i=0$, all entries are correct, as before any item is assigned, every bundle has a value of $0$. Therefore, only the value of $D'[0, \mathbf{0}]$ can be achieved. We now assume that there exists an item $x_i \in M$ such that $D'[i,\mathbf{y}]$ is set to \texttt{true} if and only if there exists an assignment $(X_1, \dots, X_n)$, such that the value of the bundle  $X_j$, when considering all items up to $x_i$, sums up to $y_j$ for all $j \in [n]$. We want to show that this also applies when including item $x_{i+1}$. 

\smallskip
	($\Rightarrow$) If \( D'[i+1, \mathbf{y}] \) is set to \texttt{true} there are two possibilities. 
	\begin{itemize}
		\item[ ] 
		\item \( D'[i, \mathbf{y}] \) was set to \texttt{true}, implying that there exists an allocation $X$ of the first $i+1$ items, such that the value of $X_j$ sums up to at least $y_j$ for all $j \in [n]$. The statement stays correct without including $x_{i+1}$, so there is an assignment $(X_1, \dots, X_n)$ such that the value of bundle $X_j$ considering items up to $i$ sums up to $y_j$ for all $j \in [n]$.
		\item There exists a \( \mathbf{y}' \) such that \( D'[i, \mathbf{y}'] \) is \texttt{true}, with $y'_j = y_j - v(x_{i+1})$ for some $j\in [n]$ and $y'_{j'} = y_{j'}$ for all other $j' \neq j$. In that case, as there exists an allocation $X$ of the first $i$ items, such that the value of $X_{j'}$ sums up to at least $y'_{j'}$ for all ${j'} \in [n]$, we can achieve the sum \( \mathbf{y} \) by including \( x_{i+1} \) in $X_j$. 
	\end{itemize} In both cases there exists an assignment $(X_1, \dots, X_n)$ such that the value of bundle $X_j$ considering items up to $x_{i+1}$ sums up to $y_j$ for all $j \in [n]$.
	
	\smallskip($\Leftarrow$) Let us assume the existence of an assignment \( (X_1, \dots, X_n) \) of the first $i+1$ items, such that the value of $X_j$ sums up to $y_j$ for all $j \in [n]$. Let $X_l$ be the set containing item $x_{i+1}$. According to our construction of the dynamic programming table, \( D'[i, \mathbf{y}'] \), with $y'_j = y_j$ for all $j \neq l$ and $y'_l = y_l - v(x_{i+1})$ will be marked \texttt{true} if such an assignment exists. As there exists a \( \mathbf{y}' \) such that \( D'[i, \mathbf{y}'] \) is true and \( y_l = y'_l + v(x_{i+1}) \), our algorithm would find this entry and \( D'[i+1, \mathbf{y}] \) would be set to \texttt{true}.
	
	\bigskip \noindent We have shown that if $D'[i, \mathbf{y}]$ is true for some $i \in \{0, \dots, m\}$, then there exists an assignment up to item $i$ with value vector $\mathbf{y}$ and this property extends to $D'[i+1, \mathbf{y}]$.

$\therefore$ By the principle of induction, the claim holds for all $x_i\in M$. \qed
}

Now, we can define the main dynamic programming table needed for our algorithm.

\begin{definition}
	For an instance $I := (A, M, v, c)$ with $m = |M|$ and $n = |A|$, 
	the \emph{primary DP table} $D$ is an Integer table with entries
	\[
	D[i,\mathbf{y}] \in \mathbb{N}_0 \cup \{\infty\},
	\]
	where $i \in \{0,\dots,m\}$ and $\mathbf{y} \in \{0,\dots,v(M)\}^n$.  
	We interpret $D[i,\mathbf y]$ as the minimum total cost of a partial allocation 
	of the first $i$ items (ordered by \emph{descending} value), with corresponding value vector $\mathbf y$, 
	that is extendable to an EFx allocation with the remaining $m-i$ items.
\end{definition}

\begin{lemma}\label{thm:dppoly-cost}
	Given an instance $I = (A, M, v, c)$ of \minEfx, the dynamic programming table $D$ can be computed in $\mathcal{O}(n(m+1) \cdot (v(M)+1)^n)$ time.
\end{lemma}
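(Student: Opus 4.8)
The plan is to fill $D$ by a single sweep over the items in \emph{descending} order of value, treating the secondary table $D'$ from \Cref{thm:dp2poly} as a constant-time oracle for the EFx-extendability condition that is built into the interpretation of $D$. First I would initialize $D[0,\mathbf{0}]=0$ and $D[0,\mathbf{y}]=\infty$ for every $\mathbf{y}\neq\mathbf{0}$, since the empty partial allocation has value vector $\mathbf{0}$ and cost $0$. For $i\ge 1$, assuming the row $D[i-1,\cdot]$ is already known, I would obtain $D[i,\mathbf{y}]$ by deciding which agent $a_j$ receives the $i$-th largest item $x_i$:
\[
D[i,\mathbf{y}] \;=\; \min_{\substack{j\in[n]\\ y_j\ge v(x_i)}} \Bigl( D[i-1,\mathbf{y} - v(x_i)\,\mathbf{e}_j] + c_j(x_i) \Bigr),
\]
where $\mathbf{e}_j$ is the $j$-th unit vector and the minimum over an empty index set is $\infty$. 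This is the routine min-cost bookkeeping: placing $x_i$ in $a_j$'s bundle increases its value by $v(x_i)$ and the total cost by $c_j(x_i)$.

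The substantive step, and where I expect the real work to lie, is ensuring that only value vectors extendable to a full EFx allocation survive, as demanded by the interpretation of $D$. From $\mathbf{y}$ and the item values I would derive, for each bundle, the lower level it must still reach for EFx to be attainable (driven by \Cref{threshold} and the within-one-item balance of \Cref{thm:diff}), and then set $D[i,\mathbf{y}]=\infty$ unless the remaining $m-i$ smallest items can top every bundle up to those levels. Precisely this ``can the smallest remaining items raise every bundle to a prescribed lower bound'' question is answered by a single lookup into $D'[m-i,\cdot]$ at the complementary target vector $(\max(L_j-y_j,0))_{j\in[n]}$; crucially, since items are consumed largest-first, the items still available at stage $i$ are exactly the $m-i$ smallest ones indexed by $D'$, and the freedom to steer small items toward the poorest bundles (while leaving value-heavy bundles ``pure'') is what lets a pure reachability test stand in for the pairwise EFx inequalities. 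The main obstacle is proving this equivalence exactly, namely that extendability to EFx is captured by the reachable-lower-bound predicate of $D'$; I would establish it by induction on $m-i$, using Observations~\ref{thm:diff} and~\ref{threshold} to convert the pairwise constraints $v(X_i)\ge v(X_k\setminus\{x\})$ into per-bundle thresholds.

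Finally, for the running time I would observe that $D$ has $(m+1)\,(v(M)+1)^n$ entries, and each entry is computed by minimizing over the $n$ agents and performing one oracle call into $D'$ together with the $O(n)$ assembly of the target vector, so each entry costs $O(n)$ and each row costs $O\bigl(n\,(v(M)+1)^n\bigr)$. Summing over the $m+1$ rows and adding the $O\bigl(n(m+1)(v(M)+1)^n\bigr)$ time to precompute $D'$ by \Cref{thm:dp2poly} yields the claimed bound $O\bigl(n(m+1)(v(M)+1)^n\bigr)$; correctness of the filled table against its interpretation then follows by the same inductive argument used for $D'$, now carrying the accumulated minimum cost along each reachable value vector.
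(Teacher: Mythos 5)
Your proposal follows essentially the paper's route: process items in descending order of value, let $D[i,\cdot]$ carry the minimum accumulated cost over value vectors, and delegate EFx-extendability to the ascending-order table $D'$ via a lookup at a deficit vector over the $m-i$ remaining items, with the same entry count and per-entry work yielding the stated bound. The one step that would fail as written is the extendability test: you apply it once per state $(i,\mathbf{y})$, with per-bundle levels $L_j$ ``derived from $\mathbf{y}$ and the item values.'' No such state-derived levels exist, because the level that a bundle $X_k$ imposes on all others is $y_k-\min_{x\in X_k}v(x)$, and the state $(i,\mathbf{y})$ does not determine the bundles' minimum items: different assignment histories reaching the same value vector leave different items on top of each bundle, and hence impose genuinely different requirements on the completion. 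The paper resolves this by gating each \emph{transition} rather than each state: when $x_i$ is given to agent $a_j$, the only constraint checked is the one imposed by $a_j$'s own bundle, whose minimum item is now known to be $x_i$, so the common threshold is the pre-addition value $y_j-v(x_i)$ (in your pull notation) and the lookup is at the vector with entries $z_{j'}=\max\{(y_j-v(x_i))-y_{j'},\,0\}$; the constraints imposed by every other bundle are, inductively, enforced at the transitions in which those bundles received their respective last items. Moving your check inside the minimum over $j$ in exactly this way recovers the paper's recurrence, and the runtime accounting is unchanged.
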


\begin{proof}
	We show how to fill the table and compute the entries efficiently.
	
	\smallskip
	\textbf{Construction:} First, we initialize the table as follows:
	\begin{align*}
		D[i,\mathbf{y}] = \begin{cases}
			0, & \text{if } \mathbf{y} = (0, \dots, 0) \text{ and } i = 0\\
			\infty, & \text{otherwise}
		\end{cases}
	\end{align*}
	We arrange the items in descending order of valuation.
	For $i > 0$, we compute all entries $D[i,\cdot]$ as follows, assuming all entries $D[i-1,\cdot]$ have already been computed.
	 For each agent $a_j \in A$ and value vector $\mathbf y$ with $D[i-1,\mathbf y] \neq \infty$, consider the value vector $\mathbf{y'}$ with \( y'_j = y_j + v(x_i) \) and  \( y'_{j'} = y_{j'} \) for all \( j' \neq j \). Let \( \mathbf{z} = (z_1, \dots, z_n) \) be a value vector, with $z_{j'} = \max(y_j - y_{j'}, 0)$ for all $j' \in [n]$. 
	If $D'[m-i,\mathbf z] = \top$, then
	\[
	D[i,\mathbf y'] = \min\{ D[i,\mathbf y'],\, D[i-1,\mathbf y] + c_j(x_i)\}.
	\] 
	
	If these conditions do not hold, meaning there is no way to achieve an EFx allocation with the partial allocation \( \mathbf{y'} \), then we do not update $D[i,\mathbf{y'}]$.
	
	\smallskip
	\textbf{Running time:}
	The table has $(m+1) \cdot (v(M)+1)^n$ entries, each processed in $n$ time.  
	With $D'$ constructed once, the total running time is $\mathcal{O}(n (m+1) \cdot (v(M)+1)^n)$.
	
	\smallskip
	\textbf{Correctness:}  We prove by induction on $i$ that $D[i,\mathbf y]$ equals the minimum total cost of a partial allocation of the first $i$ items (ordered by decreasing value) with value vector $\mathbf y$ that can be extended to an EFx allocation using the remaining $m-i$ items.

	\emph{Base case:} $i=0$. Only the empty allocation of cost $0$ is feasible, as encoded.  
	
	\emph{Induction step:} Assume correctness for $i-1$. 
	Consider a transition that assigns $x_i$ to agent $a_j$.
	Since items are processed in decreasing order of value, $x_i$ is the least-valued item in the resulting bundle $X_j$.
	Moreover, the condition $D'[m-i,\mathbf z]=\top$ certifies that the remaining $m-i$ items admit a completion in which, for every other agent, the bundle-value difference toward $a_j$ is at most $x_i$, so EFx is preserved.
	Hence every transition produces a partial allocation that is EFx-extendable.
	
	Conversely, let $\mathcal{X}$ be an EFx-extendable allocation of the first $i$ items (ordered by decreasing value). In $\mathcal{X}$, the item $x_i$ is assigned to some agent $a_j$ and removing $x_i$ still results in an EFx-extendable allocation of the first $i-1$ items. The feasibility of completing $\mathcal{X}$ with the remaining items is captured by the entry $D'[m-i,\mathbf z]=\top$ for the corresponding state $\mathbf z$.
	Therefore, every EFx-extendable allocation of the first $i$ items is represented by some transition.
	Finally, since $D[i,\mathbf y]$ takes the minimum over all transitions from feasible states, it stores the minimum total cost among all EFx-extendable partial allocations of the first $i$ items with value vector $\mathbf y$.
\end{proof}

\begin{proof}[Proof of \Cref{thm:polybound-cost}]
	Construct $D'$ and $D$ in  $\mathcal{O}(n(m+1)\cdot(v(M)+1)^n)$ time as described above.  
	At the end, examine all entries $D[m,\mathbf y]$ with $\sum_j y_j = v(M)$ and return the one with the minimum cost over all such allocations.
	
	By \Cref{thm:dppoly-cost}, $D[m,\mathbf y]$ gives the minimum cost of any EFx allocation with value vector $\mathbf y$. Taking the minimum over all $\mathbf y$ yields exactly the global optimum cost. 
	Since EFx allocations always exist under identical valuations, entries that remain $\infty$ because they cannot be extended to an EFx allocation are harmless, as no EFx allocation has worse cost than $\infty$, so these entries never affect the optimum. Hence the algorithm decides \minEfx\ correctly. 
	
	When $n$ is constant and item values are polynomially bounded my $m$, the algorithm runs in polynomial time.
\end{proof}

\subsection{Hardness of Approximation}

\begin{theorem}\label{thm:inapprox2}
	For any $\rho > 1$, unless $P = N P$ \minEfx \ is not approximable in polynomial time within a factor of $\rho$.
\end{theorem}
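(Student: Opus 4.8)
The plan is to prove inapproximability through a gap-producing reduction from \parti\ in which \texttt{yes}-instances force optimal cost exactly $0$ while \texttt{no}-instances force cost at least $1$. Once such a reduction is in place, inapproximability for every $\rho>1$ follows by a standard argument: a polynomial-time $\rho$-approximation would, on a \texttt{yes}-instance, return an EFx allocation of cost at most $\rho\cdot 0=0$, hence of cost exactly $0$, whereas on a \texttt{no}-instance every EFx allocation (in particular the returned one) has cost at least $1$. Thus checking whether the returned cost equals $0$ would decide \parti\ in polynomial time, forcing $P=NP$. Crucially, this argument is independent of the concrete value of $\rho$, so a single reduction handles all ratios simultaneously.

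I would build the reduction on the instance from \Cref{thm:2np}, but redesign the cost functions so that the ``balanced'' allocation becomes free. For each $s_i\in S$ I create an item $x_i$ of value $s_i$ that is cost-free for \emph{both} agents, and I add two value-$0$ anchor items $z_1,z_2$ with $c_1(z_1)=0,\ c_2(z_1)=1$ and $c_2(z_2)=0,\ c_1(z_2)=1$. In any cost-$0$ allocation the anchor $z_1$ must go to $a_1$ and $z_2$ to $a_2$, while the items $x_i$ may be split arbitrarily at no cost. The role of the anchors is to tighten EFx: since each bundle then contains a value-$0$ item, $\min_{x\in X_j}v(x)=0$, so the EFx inequalities $v(X_i)\ge v(X_j)-\min_{x\in X_j}v(x)$ collapse to $v(X_1)=v(X_2)$ (using $v(M)=2T$). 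Hence a cost-$0$ EFx allocation exists exactly when the $x_i$ admit a split into two bundles of equal value $T$, i.e.\ exactly when the \parti\ instance is a \texttt{yes}-instance.

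It then remains to verify both directions and the cost gap. For the forward direction, a partition $S_1,S_2$ of $S$ yields $X_1=\{x_i:s_i\in S_1\}\cup\{z_1\}$ and $X_2=\{x_i:s_i\in S_2\}\cup\{z_2\}$ with $v(X_1)=v(X_2)=T$, so no agent envies the other and the allocation is EFx at cost $0$. For the converse, I argue that any cost-$0$ allocation must place $z_1$ with $a_1$ and $z_2$ with $a_2$ (the only assignments avoiding a cost of $1$), whence the collapse above forces $v(X_1)=v(X_2)=T$ and the induced split of $S$ is a valid partition. Since all costs are integers, a \texttt{no}-instance admits no cost-$0$ EFx allocation, so its optimum is at least $1$; and because EFx allocations always exist under identical valuations, the optimum is finite, so the $0$-versus-$1$ gap is well defined.

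The main obstacle, and the only real subtlety, is forcing EFx to certify \emph{exact} value balance rather than the looser condition $|v(X_1)-v(X_2)|\le v_{\max}$ of \Cref{thm:diff}: without intervention the slack term $\min_{x\in X_j}v(x)$ makes nearly every split EFx-feasible, so a cost-$0$ allocation would exist trivially and the reduction would collapse. The value-$0$ anchors resolve this by pinning each per-bundle minimum to $0$, and the accompanying $1$-cost entries merely ensure that any cost-$0$ solution is compelled to route each anchor to its designated agent. Confirming that no alternative cost-$0$ assignment can evade this forcing is the step that requires the most care.
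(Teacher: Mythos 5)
Your proposal is correct, and it takes a genuinely different route from the paper. The paper reduces from a restricted partition variant (\textsc{Restricted weight equal-cardinality partition}, whose NP-hardness it must first establish in an auxiliary lemma), uses three agents, and builds the approximation factor $\rho$ into the cost functions to create a $1$-versus-$\rho$ gap; a fresh instance is thus constructed for each $\rho$. You instead reduce from plain \parti{} with two agents and engineer a $0$-versus-$1$ gap: zero-value anchor items with asymmetric unit costs simultaneously (i) force the anchors to their designated agents in any cost-$0$ allocation and (ii) pin each bundle's minimum item value to $0$, so that the paper's EFx definition (which quantifies over \emph{all} items, including zero-valued ones --- the same mechanism the paper itself exploits in \Cref{thm:2np}) collapses to exact balance $v(X_1)=v(X_2)=T$. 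Your backward direction is sound: nonnegativity and additivity of costs force $z_1\in X_1$, $z_2\in X_2$ in any cost-$0$ solution, and integrality of costs gives the \texttt{no}-side bound of $1$. What your approach buys: a single $\rho$-independent reduction that handles every factor at once (indeed it rules out any computable approximation factor, which also subsumes \Cref{thm:inapprox3}), avoids the auxiliary NP-hardness lemma, and uses one fewer agent. What the paper's approach buys: by keeping the optimum at $1$ rather than $0$, it is immune to the definitional awkwardness of multiplicative guarantees when $\mathrm{OPT}=0$ --- under the standard ``$\mathrm{cost}(\mathrm{ALG})\le\rho\cdot\mathrm{OPT}$'' convention of the textbooks the paper cites your argument is fine, but in frameworks where objective values are required to be positive (or where the ratio $\mathrm{ALG}/\mathrm{OPT}$ must be well defined), a zero-optimum gap instance is inadmissible, so the paper's $1$-versus-$\rho$ gap is the more robust formulation. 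If you keep your version, you should state explicitly which definition of $\rho$-approximation you use and note that costs are additive.
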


We prove this theorem, by giving a gap-preserving reduction from the following NP-hard (\Cref{thm:np-special} $\star$) problem:

\newcommand{\specialparti}{\textsc{Restricted weight equal-cardinality partition}}

\problemdef{\specialparti}{sp}
{A multiset $S = \{s_1, \dots, s_m\}$ of $m$ positive integers with $\sum_{s \in S} s = 2\cdot T$, such that each integer $s \in S$ is smaller than $\frac{2T}{m} + \epsilon$, for some $\epsilon > 0$.}
{Is there a partition of $S$ into $S_1,S_2$ such that every subset sums up to $T$ and $|S_1|=|S_2|$?}

\appendixproof{thm:np-special}{
	\begin{lemma}\label{thm:np-special}
		\specialparti \ is NP-hard.
	\end{lemma}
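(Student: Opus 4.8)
The plan is to give a polynomial-time many-one reduction from \parti{} (\Cref{prob:parti}), which is weakly NP-hard, to \specialparti. Relative to plain \parti, an instance of \specialparti{} carries two extra features: the equal-cardinality requirement $|S_1|=|S_2|$ and the weight restriction. The crux of the reduction is to force the cardinality balance while still capturing exactly the solution structure of the original instance, since a \parti{} solution $S_1,S_2$ with $\sum S_1=\sum S_2=T$ need not have $|S_1|=|S_2|$. The weight restriction will then fall out of the same construction.

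Given a \parti{} instance $S=\{s_1,\dots,s_n\}$ with $\sum_{s\in S}s=2T$, I would fix a large constant $C>2T$ and build a multiset $S'$ of $m=2n$ elements: $n$ \emph{data} elements $b_i:=C+s_i$ and $n$ \emph{filler} elements $c_j:=C$. The new total is $\sum S'=2nC+2T$, so the target per side is $nC+T$, and I ask whether $S'$ admits an equal-sum partition into $S_1',S_2'$ with $|S_1'|=|S_2'|=n$. The design principle is that the large offset $C$ decouples \emph{which} original elements sum to $T$ from \emph{how} the two sides are balanced in size: the fillers can be thrown in freely to absorb any cardinality imbalance of the underlying \parti{} solution.

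For correctness I would argue both directions. For soundness, suppose $S'$ has an equal-cardinality equal-sum partition, and say side $1$ contains $p$ data elements and $q$ fillers with $p+q=n$; its total is then $pC+qC$ plus the original values of the chosen data elements, i.e.\ $nC$ plus that original sum. Setting this equal to $nC+T$ forces the chosen data elements to contribute exactly $T$, which is a valid \parti{} solution. For completeness, given $S_1,S_2$ with $\sum S_1=\sum S_2=T$ and $|S_1|=k$, I place the $k$ data elements corresponding to $S_1$ together with $n-k$ fillers on one side and the remaining $n-k$ data elements together with $k$ fillers on the other; each side then has exactly $n$ elements and sum $(kC+T)+(n-k)C=nC+T$. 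Thus the fillers exactly compensate for any imbalance $k\neq n-k$, which is precisely what lets a reduction from unrestricted \parti{} land in the equal-cardinality world.

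It remains to check the weight restriction and polynomiality. In $S'$ the threshold $\tfrac{2(nC+T)}{2n}=C+\tfrac{T}{n}$ equals the average value, while the largest element is $C+s_{\max}$; hence every element exceeds the average by at most $s_{\max}-\tfrac{T}{n}$, so the bound $s<\tfrac{2T}{m}+\epsilon$ holds for any $\epsilon>s_{\max}-\tfrac{T}{n}$, and such a positive $\epsilon$ exists since $s_{\max}\ge\tfrac{2T}{n}>\tfrac{T}{n}$. The construction produces $2n$ elements of magnitude $O(T)$ and is clearly computable in polynomial time. The main obstacle is the cardinality constraint, since a naive identity-style reduction would only capture partitions that happen to be balanced; the offset-plus-filler gadget is the essential idea that overcomes it. I would also remark that taking $C$ arbitrarily large makes every element lie within a $(1+o(1))$ multiplicative factor of the mean, so the items become arbitrarily concentrated around a common value, which is the feature exploited in the downstream EFx reduction.
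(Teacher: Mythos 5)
Your proof is correct, but it takes a genuinely different route from the paper's. The paper reduces from \textsc{Equal-Cardinality Partition} (whose weak NP-hardness it cites), so the cardinality constraint comes for free from the source problem; its only gadget is the uniform shift $s_i \mapsto s_i + mT$, which serves purely to make the weights concentrated enough for the restriction. You instead reduce from plain \parti{} and handle the cardinality constraint yourself with the offset-plus-filler gadget ($n$ data elements $C+s_i$ plus $n$ fillers equal to $C$): since any equal-cardinality split of the $2n$ elements puts exactly $n$ copies of $C$ on each side, the offsets cancel and the data elements on one side must contribute exactly $T$, while the fillers absorb any cardinality imbalance of the underlying \parti{} solution. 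Both arguments are sound; yours is more self-contained (it effectively embeds the standard padding proof that the equal-cardinality variant is hard, rather than citing that hardness), at the cost of a slightly more elaborate construction and correctness argument. Note also that your soundness direction never actually uses $C>2T$ --- any positive $C$ works, because the $C$-contributions cancel exactly; the magnitude of $C$ matters only for how tightly the weights concentrate around the mean (your relative slack is $O\bigl(T/(nC)\bigr)$ and tunable, versus the paper's fixed $\approx 2/m^2$ from the shift by $mT$), which is precisely the property the downstream inapproximability argument exploits.
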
 
	In the following proof we give a polynomial-time many-one reduction from the weakly NP-hard \citep{NP-co1} problem \equalparti \ (\Cref{prob:equalparti}).

	\problemdef{\equalparti}{ep}
	{A multiset $S = \{s_1, \dots, s_m\}$ of $m$ positive integers with $\sum_{s \in S} s = 2\cdot T$.}
	{Is there a partition of $S$ into $S_1,S_2$ such that every subset sums up to $T$ and $|S_1|=|S_2|$?}
	
	\begin{proof}[Proof (\Cref{thm:np-special})]
		Let $I = S$ be an instance of \equalparti, where $\sum_{s \in S} s = 2\cdot T$ and $|S| = m$. We construct an instance $I' := S'$ of \specialparti \ in polynomial time as follows.
		
		\smallskip\noindent
		\textbf{Construction:}
		For each integer $s_i \in S$ we add an integer $s'_i$ to $S'$ such that $s_i' = s_i + mT$. Note that the sum of elements in $S'$ is
		\[
		\sum_{s' \in S'} s' = \sum_{s \in S} \left( s + mT \right) = \sum_{s \in S} s + \sum_{s \in S} mT = 2T + m^2T = (m^2+2)T = 2T'
		\]
		
		This matches the required condition for the input to \specialparti.
		Let  $\epsilon = T$.  This choice of $\epsilon$ ensures that it is small enough relative to the integers in  $S'$.
		Each integer $s_i' \in S'$  will be smaller than $\frac{2T'}{m} + \epsilon$ since
		\[
		s_i' = s_i + mT \leq (m+1)T \leq \frac{m^2T}{m} + T \leq \frac{2T'}{m} + T = \frac{2T'}{m} + \epsilon 
		\]
		This ends the construction.
		
		\smallskip\noindent
		\textbf{Correctness:} We claim that $I$ is a \texttt{yes}-instance if and only if $I'$ is a \texttt{yes}-instance.
		
		\smallskip($\Rightarrow$) Let ${S_1, S_2}$ be a solution of $I$. We claim that $S' = S'_1, S'_2$ is a solution to $I'$ where $S'_i = \bigcup_{s \in S_i} s'$. Note that as $|S_1| = |S_2|$, $|S'_1| = |S'_2|$ and by construction we have that 
		\begin{align*}
			\sum S'_1 = \sum S_1 + \frac{m^2T}{2} = \sum S_2 + \frac{m^2T}{2} = \sum S'_2
		\end{align*} Which means $\sum S'_1 = \sum S'_2$. Hence $S'$ is a solution to $I'$.
		
		\smallskip($\Leftarrow$) Let ${S'_1, S'_2}$ be a solution of $I'$. We claim that $S = S_1, S_2$ is a solution to $I$ where $S_i = \bigcup_{s \in S'_i} s'$. Note that as $|S'_1| = |S'_2|$, $|S_1| = |S_2|$ and by construction we have that 
		\begin{align*}
			\sum S_1 = \sum S'_1 - \frac{m^2T}{2} = \sum S'_2 - \frac{m^2T}{2} = \sum S_2
		\end{align*} We follow that $\sum S_1  = \sum S_2$, meaning $I$ is a \texttt{yes}-instance for \equalparti.
		\smallskip
		
		It follows that \specialparti \ is NP-hard.
\end{proof} }

%Given \Cref{thm:np-special}, we can now prove \Cref{thm:inapprox2} as follows:
\begin{proof}[Proof (\Cref{thm:inapprox2})]
	Let $I = S$ be an instance of \specialparti, where $\sum_{s \in S} s = 2T$ and $|S| = m$. We construct an instance of \minEfx \ in polynomial time as follows.
	
	\smallskip\noindent
	\textbf{Construction.}  
	For each integer $s_i \in S$ we create an item $x_i \in M$ with $v(x_i) = s_i$.  
	Additionally, we add two items $\{x_{m+1}, x_{m+2}\}$ with $v(x_{m+1}) = v(x_{m+2}) = 0$, and one special item $x^*$ with $v(x^*) = T$.  
	
	The cost functions are defined as:
	\begin{align*}
		c_1(x_i) &= 0 & \forall x_i \in M \setminus \{x^*, x_{m+2}\}, 
		&\qquad c_1(x^*) = c_1(x_{m+2}) = \rho, \\
		c_2(x_i) &= 0 & \forall x_i \in M \setminus \{x^*, x_{m+1}\}, 
		&\qquad c_2(x^*) = c_2(x_{m+1}) = \rho, \\
		c_3(x_i) &= \rho & \forall x_i \in M \setminus \{x^*\}, 
		&\qquad c_3(x^*) = 1.  
	\end{align*}
	Finally, we set $k=1$. This completes the construction. 
	
	\medskip\noindent
	\textbf{Completeness.}  Suppose $S$ is a \texttt{yes}-instance of \specialparti, i.e., $S$ can be partitioned into $S_1$ and $S_2$ with $\sum_{s \in S_1} s = \sum_{s \in S_2} s = T$.  
	Consider the allocation $\mathcal{X} = (X_1, X_2, X_3)$ defined by
	\begin{align*}
		X_1 &= \{x_i : s_i \in S_1\} \cup \{x_{m+1}\},\\
		X_2 &= \{x_i : s_i \in S_2\} \cup \{x_{m+2}\},\\
		X_3 &= \{x^*\}.
	\end{align*}
	By construction, $v(X_1) = v(X_2) = v(X_3) = T$, so $\mathcal{X}$ is envy-free and thus also EFx.  
	The cost is
	\[
	\text{cost}(\mathcal{X}) = c_1(X_1) + c_2(X_2) + c_3(X_3) = 0 + 0 + 1 = 1.
	\]
	Hence there exists an EFx allocation of cost at most $1$.
	
	\smallskip\noindent
	\textbf{Soundness:}
	We have $\sum_{x \in M} v(x) = 3T$, and for any allocation $\mathcal{X} = (X_1, X_2, X_3)$ it holds that
	\[
	\text{cost}(\mathcal{X}) = c_1(X_1) + c_2(X_2) + c_3(X_3).
	\]
	Now suppose $S$ is a \texttt{no}-instance of \specialparti. Since $S$ cannot be partitioned evenly, at least one bundle must have value strictly larger than $T$, and another strictly smaller.  
	If $v(X_3) > T$, then $a_3$ must hold at least two items, which forces $c_3(X_3) \geq \rho$. If instead $v(X_1) > T$, EFx requires $v(X_1) - v(\min_v X_1) \leq T$, since some other bundle has value below $T$. This condition excludes $x_{m+1}$ from $X_1$, which in turn forces either $c_2(X_2) \geq \rho$ or $c_3(X_3) \geq \rho$. By symmetry, the same reasoning applies to $X_2$ and $x_{m+2}$.  
	Thus every EFx allocation must incur $\text{cost}(\mathcal{X}) \geq \rho$.
	
	\noindent Consequently, the ratio between the \texttt{yes}- and \texttt{no}-cases is $\rho/1 = \rho$. Therefore, unless $P = NP$, \minEfx \ is not approximable within a factor of $\rho$ in polynomial time.
\end{proof}

\begin{corollary}\label{thm:inapprox3}
	Unless $P = NP$, $\minEfx \notin \mathrm{APX}$.
\end{corollary}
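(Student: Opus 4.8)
The plan is to derive this as an immediate consequence of \Cref{thm:inapprox2}. Recall that an optimization problem belongs to the class $\mathrm{APX}$ precisely when there exists some \emph{fixed} constant $\rho \ge 1$ together with a polynomial-time algorithm that, on every instance, returns a feasible solution whose cost is within a factor $\rho$ of the optimum. Thus membership in $\mathrm{APX}$ asserts the existence of a single finite approximation ratio, whereas \Cref{thm:inapprox2} rules out every ratio $\rho > 1$. The entire argument is therefore a matter of quantifier juxtaposition: "there exists a constant that works" versus "no constant greater than one works."

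First I would argue by contradiction. Suppose $\minEfx \in \mathrm{APX}$; then there is a constant $\rho_0$ and a polynomial-time $\rho_0$-approximation algorithm. Since any $\rho$-approximation is also a $\rho'$-approximation for every $\rho' \ge \rho$, I may assume without loss of generality that $\rho_0 > 1$: if the best achievable constant were $\rho_0 = 1$, the algorithm would compute an optimal EFx allocation in polynomial time, already contradicting the NP-hardness established in \Cref{thm:2np} unless $P = NP$. Instantiating \Cref{thm:inapprox2} with the specific value $\rho = \rho_0 > 1$ then states that no polynomial-time $\rho_0$-approximation exists unless $P = NP$, directly contradicting the assumed algorithm. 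Hence $\minEfx \notin \mathrm{APX}$ unless $P = NP$.

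There is essentially no technical obstacle here, as the substantive work was already carried out in proving \Cref{thm:inapprox2}; the reduction there produces a fixed multiplicative gap of $\rho$ for an arbitrary target $\rho > 1$, which is exactly the strength needed to exclude membership in $\mathrm{APX}$. The only point requiring a moment's care is the boundary case $\rho = 1$, which corresponds to exact solvability and is dispatched by the NP-hardness of \Cref{thm:2np} rather than by \Cref{thm:inapprox2} itself (the latter is vacuous at $\rho = 1$). Combining the two cases yields that no constant-factor polynomial-time approximation can exist unless $P = NP$, completing the proof.
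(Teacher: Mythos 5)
Your proposal is correct and matches the paper's (implicit) argument: the corollary is stated as an immediate consequence of \Cref{thm:inapprox2}, which is exactly the quantifier-flipping you carry out. The only remark is that your case distinction at $\rho_0 = 1$ is unnecessary — since any $\rho_0$-approximation is also a $2$-approximation, instantiating \Cref{thm:inapprox2} at $\rho = 2$ already covers it — but this does not affect correctness.
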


\section{Restricted cost functions}\label{chap:npmin}

\smallskip Building on the above results, we now restrict our attention to a simplified setting in order to analyze the algorithmic complexity more systematically. In this variant, costs are assumed to be \emph{item-independent}: the cost for agent $a_j$ depends only on a multiplicative factor $\alpha_j$ applied to the value of the bundle $X$, i.e.,
\begin{align*}
	c_j(X) = \alpha_j \cdot v(X).
\end{align*}

With a slight abuse of notation, we will henceforth represent the input for the cost functions as a vector, where each coordinate corresponds to the cost factor of an agent.

\smallskip
To illustrate the problem, consider a disaster relief operation where resources must be distributed among several affected regions. Let $M$ denote the available supplies and $A$ the set of regions. Each item has a value $v(x)$, and each region $a_i$ has a cost factor $\alpha_j$ capturing delivery difficulty (e.g., distance or accessibility). The goal is to allocate resources so that the outcome is fair (EFx) while minimizing the total delivery cost $\sum_{a_j \in A} \alpha_j \cdot v(X_j)$ (see \Cref{example} in the appendix).

\toappendix{
	
	In the context of EFx, we can represent an allocation \( X \) as a partition of stacks, by ordering the items within each subset (\Cref{value}). 
	
	\begin{definition}(\textbf{Stack})\label{stack}
		A \emph{stack} is a set of items arranged in non-increasing order of their values. Formally, it is an ordered set \( S \subseteq M \) such that \( v(x_i) \geq v(x_{i+1}) \) for all \( i \in \{1, \dots, |S| - 1\} \), where \( v(x) \) denotes the value of item \( x \).
	\end{definition}
	
	We define the \emph{minStack} as the minimum value of any stack given an EFx allocation and \emph{maxStack} as the maximum value of any stack. Note that \( v(\text{maxStack}) - v(x_{|\text{maxStack}|}) \leq v(\text{minStack}) \). This implies that the highest total value of a stack to which we can still add a smaller-valued item without violating the EFx condition is bounded by \text{minStack}, as illustrated in \Cref{minStack}. Organizing items in a stack according to their values helps us illustrate and analyze the EFx property more clearly by allowing us to easily identify the item with the smallest value, which is always at the top.
	
	\begin{figure}[h!]
		\centering
		\begin{minipage}[b]{0.4\textwidth}
			\centering
			\begin{tikzpicture}[scale=1]
				% Second image
				\foreach \x in {0, 3} {
					\foreach \y in {0} {
						\draw[fill=airforceblue] (\x, \y) rectangle (\x+1, \y+1.5);
					}
				}
				\draw[ultra thick,airforceblue,decorate,decoration={brace,amplitude=12pt}]  (-.2, 0) -- (-.2, 1.5);
				\node[airforceblue] at (-1.25, .75){$v(x_1)$};
				
				\foreach \x in {0} {
					\foreach \y in {1.5} {
						\draw[fill=airforceblue] (\x, \y) rectangle (\x+1, \y+1.5);
						\draw[ultra thick,airforceblue,decorate,decoration={brace,amplitude=12pt}]  (-.2, \y) -- (-.2, \y+1.5);
						\node[airforceblue] at (-1.25, \y +.75){$v(x_2)$};
					}
				}
				\foreach \x in {3} {
					\foreach \y in {1.5, 2.5} {
						\draw[fill=aliceblue] (\x, \y) rectangle (\x+1, \y+1);
						\draw[ultra thick,aliceblue,decorate,decoration={brace,amplitude=12pt}]  (\x - 0.2, \y) -- (\x - 0.2, \y+1);
					}
				} 
				\draw[ultra thick,airforceblue,decorate,decoration={brace,amplitude=12pt}]  (2.8, 0) -- (2.8, 1.5);
				\node[airforceblue] at (1.75, .75){$v(x_3)$};
				\node[aliceblue] at (1.75, 2){$v(x_4)$};
				\node[aliceblue] at (1.75, 3){$v(x_5)$};
				
				\node[black] at (0.5, 2.25){$x_2$}; 
				\node[black] at (0.5, 0.75){$x_1$}; 
				\node[black] at (3.5, 0.75){$x_3$}; 
				\node[black] at (3.5, 2){$x_4$}; 
				\node[black] at (3.5, 3){$x_5$}; 
				
			\end{tikzpicture}
			\caption{Representation of an EFx allocation using stacks}\label{value}
		\end{minipage}
		\hspace{2em}
		\begin{minipage}[b]{0.4\textwidth}
			\centering
			\begin{tikzpicture}[scale=1]
				% Second image
				\foreach \x in {0, 3.5} {
					\foreach \y in {0} {
						\draw[fill=airforceblue] (\x, \y) rectangle (\x+1, \y+1.5);
					}
				}
				\foreach \x in {0} {
					\foreach \y in {1.5, 3} {
						\draw[fill=airforceblue] (\x, \y) rectangle (\x+1, \y+1.5);
					}
				}
				\foreach \x in {3.5} {
					\foreach \y in {1.5, 2.5} {
						\draw[fill=aliceblue] (\x, \y) rectangle (\x+1, \y+1);
					}
				}
				\draw[ultra thick,airforceblue,decorate,decoration={brace,amplitude=12pt}]  (-.2, 0) -- (-.2, 4.5);
				\draw[ultra thick,aliceblue,decorate,decoration={brace,amplitude=12pt}]  (3.3, 0) -- (3.3, 3.5);
				\node[airforceblue] at (-1.7, 2.25){maxStack}; 
				\node[aliceblue] at (2, 1.75){minStack}; 
			\end{tikzpicture}
			\caption{Representation of the maximum and  minimum stackable values}\label{minStack}
		\end{minipage}
		
	\end{figure}	
	
	\begin{example}\label{example}
		\smallskip
		Imagine a disaster relief operation where various resources—such as food, water, medical supplies, and shelter materials—need to be distributed among several affected regions. \( M = \{x_1, \dots, x_m\} \) represents the set of resources, \( A = \{a_1, \dots, a_n\} \) represents the set of regions or communities affected by the disaster, \( v \) represents the valuation function, which indicates the value of each resource, and $(c_i)_{a_i \in A}$ represent the cost functions, which includes the cost of transporting the resources from the central depot to the regions. The goal is to allocate the resources in a way that ensures fairness among the regions (achieving an EFx allocation) while minimizing the overall  transportation cost.
		
		\smallskip
		Suppose there are four regions \(A = \{a_1, a_2, a_3, a_4\}\) and two types of resources - {large care packages}: \( x_1, x_2, x_3, x_4, x_5, x_6, x_7, x_8 \)
		and {small care packages}: \( x_9, x_{10},x_{11} \), both containing essential items, with valuations \(v(\text{Large}) = 15 \) and \(v(\text{Small}) = 10\).
		
		In this case, the cost functions $(c_i)_{a_i \in A}$ could be influenced by the distance from the central depot to each region.  For example, regions situated at a greater distance from the depot may incur higher transportation costs. As illustrated in \Cref{average}, a possible EFx allocation, could provide the four regions with similar items to ensure overall fairness.
		
		However, in such scenarios, financial resources are limited, and the costs associated with transportation increase in proportion to the weight of the load. Therefore, it would be beneficial to reduce the load for more distant regions.
		
		In order to achieve a more efficient allocation, we can define cost functions $c_i(X) = \alpha_i \cdot v(X)$ for $a_i \in A$, where each factor $\alpha$ represents the connectivity of a region. The term "connectivity" in this context refers to how well a region is linked within the network, with a lower factor indicating higher level of connectivity (less cost) and a higher factor indicating lower level of connectivity (more cost). Incorporating this constraint allows us to reduce transportation costs, improve the assignment process, and obtain an EFx assignment (Here let $c(X) = 0 \cdot v(X_1) + 1 \cdot v(X_2) + 1 \cdot v(X_3) + 2 \cdot v(X_4)$).

		\begin{figure}[h!]
			\centering
			\begin{minipage}[b]{0.4\textwidth}
				\centering
				\begin{tikzpicture}[scale=0.8]
					% Second image
					\foreach \x in {0, 1.5, 3, 4.5} {
						\foreach \y in {0, 1.5} {
							\draw[fill=airforceblue] (\x, \y) rectangle (\x+1, \y+1.5);
						}
					}
					\foreach \x in {0, 1.5, 3} {
						\foreach \y in {3} {
							\draw[fill=aliceblue] (\x, \y) rectangle (\x+1, \y+1);
						}
					}
				\end{tikzpicture}
				\caption{An EFx allocation averaging the bundle values}\label{average}
			\end{minipage}
			\hfill
			\begin{minipage}[b]{0.4\textwidth}
				\centering
				\begin{tikzpicture}[scale=0.8]
					% Second image
					\foreach \x in {0, 1.5, 3} {
						\foreach \y in {0, 1.5} {
							\draw[fill=airforceblue] (\x, \y) rectangle (\x+1, \y+1.5);
						}
					}
					%\draw[ultra thick,airforceblue,decorate,decoration={brace,amplitude=12pt}]  (-.2, 0) -- (-.2, 3);
					
					%\node[airforceblue] at (-1, 1.5){$S$}; 
					
					\foreach \x in {0, 1.5} {
						\foreach \y in {3} {
							\draw[fill=airforceblue] (\x, \y) rectangle (\x+1, \y+1.5);
						}
					}
					\foreach \x in {4.5} {
						\foreach \y in {0, 1, 2} {
							\draw[fill=aliceblue] (\x, \y) rectangle (\x+1, \y+1);
						}
					}
				\end{tikzpicture}
				\caption{An EFx Allocation minimizing the cost function}\label{fig:minCost}
			\end{minipage}
			
		\end{figure}
	\end{example}
}
In this context, it is critical to ensure a fair distribution of resources, such that no region feels significantly disadvantaged, particularly in the context of disaster relief. The additional constraint of minimizing transportation costs addresses the practical aspect of limited budgets and resource efficiency.

\subsection{NP-hardness}

The factor model reflects settings where agents agree on the relative value of items but differ in their efficiency or budget.  For instance, distributing identical supplies to two regions may yield equal value,
but delivering to a remote region is uniformly more expensive.  

Furthermore, this restriction enables the design of more efficient approximation algorithms, offering a trade-off between fairness guarantees and computational tractability. However, despite this simplification, the problem remains computationally challenging. In fact, we establish hardness even in the most elementary nontrivial case.

\begin{theorem}[$\star$]\label{thm:np}
	\minEfx \ is NP-hard, even with two agents under restricted cost functions.
\end{theorem}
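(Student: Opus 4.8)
The plan is to exploit that, with two agents and factor costs $c_j(X)=\alpha_j v(X)$, the total cost becomes a linear function of a single quantity. Since $v(X_1)+v(X_2)=v(M)$ is fixed, the cost $\alpha_1 v(X_1)+\alpha_2 v(X_2)$ equals $\alpha_2 v(M)-(\alpha_2-\alpha_1)v(X_1)$, so for $\alpha_1<\alpha_2$ (I would take $\alpha=(0,1)$, i.e. $c_1\equiv 0$, $c_2=v$) \emph{minimizing cost is exactly maximizing the value $v(X_1)$ of the cheaper agent's bundle} subject to EFx. For two agents with a common additive valuation, EFx holds iff the larger bundle minus its least-valued item does not exceed the smaller one; writing $\mu=\min_{x\in X_1}v(x)$, the relevant inequality is $2v(X_1)-\mu\le v(M)$ (cf.\ \Cref{thm:diff}). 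Thus the task reduces to choosing $X_1$ maximizing $v(X_1)$ under $v(X_1)\le(v(M)+\mu)/2$ — a subset-sum–type problem whose \emph{capacity depends on the smallest chosen item}. This coupling is what I would turn into hardness, by reducing from \specialparti.

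For the construction, given a \specialparti instance $S=\{s_1,\dots,s_m\}$ with $\sum_i s_i=2T$, target $T$, and $s_i<2T/m+\epsilon$ (and $m$ even, else output a trivial no-instance), I create one item $x_i$ of value $R+s_i$ for each $s_i$, where $R$ is a large integer (say $R>2T$), together with a single \emph{anchor} item $g$ of small value $G:=\lceil m\epsilon/2\rceil+1$. The large base $R$ dwarfs all $s_i$ and $G$, so $g$ is the globally smallest item; moreover, because any two bundles whose counts of $x_i$-items differ would have value gap on the order of $R$, while EFx permits a gap of at most one item's value ($\le R+\max_i s_i<2R$), \emph{EFx alone forces each bundle to contain exactly $m/2$ of the $x_i$}, with $g$ placed on one side. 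I set the cost budget $k=mR/2+T$, equivalently the threshold $\tau:=v(M)-k=mR/2+T+G$ on $v(X_1)$.

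For completeness, a balanced partition $S=S_1\uplus S_2$ with $|S_1|=|S_2|=m/2$ and $\sum S_1=\sum S_2=T$ yields $X_1=\{g\}\cup\{x_i:s_i\in S_1\}$ and $X_2=\{x_i:s_i\in S_2\}$, so $v(X_1)=mR/2+T+G$ and $v(X_2)=mR/2+T$; deleting the minimum $g$ of $X_1$ equalizes the bundles, hence the allocation is EFx and costs $v(X_2)=k$. For soundness I show that in a no-instance every EFx allocation has $v(X_1)<\tau$, hence cost $v(X_2)>k$. Splitting on the anchor: if $g\in X_1$, then $\mu=G$ and the EFx inequality gives $\sum_{i:x_i\in X_1}s_i\le T$, so $v(X_1)\le mR/2+T+G=\tau$ with equality only if some $m/2$-subset sums to exactly $T$ — impossible in a no-instance, so $v(X_1)<\tau$. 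If $g\notin X_1$, then $\mu=R+\nu\ge R$ loosens the capacity, but with $|X_1\cap\{x_i\}|=m/2$ the value is $v(X_1)=mR/2+\sum_{i\in P}s_i\le mR/2+\Sigma_{\mathrm{top}}$, where $\Sigma_{\mathrm{top}}$ is the top $m/2$ sum; the bounded-element hypothesis gives $\Sigma_{\mathrm{top}}<T+m\epsilon/2<T+G$, so again $v(X_1)<\tau$.

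The main obstacle is precisely this \emph{overshoot} in the second case: when the anchor is excluded, the smallest item of $X_1$ is large ($\approx R$), which relaxes the EFx capacity and could a priori let an unbalanced half push $v(X_1)$ past $\tau$. Taming it is what the two gadgets accomplish in tandem — the large base $R$ pins each bundle's cardinality to $m/2$ (killing the larger, easier-to-overshoot bundles), and the bounded-element condition of \specialparti caps the top half-sum at $T+m\epsilon/2$ so that the small anchor value $G$ cleanly separates the ``exact-$T$'' optimum from every unbalanced alternative. The quantitative checks that make the reduction go through are exactly that $R$ is large enough to force the cardinality in all EFx allocations and that $G$ lands strictly between the no-instance and yes-instance optima.
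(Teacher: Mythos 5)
Your proposal is correct and rests on the same core device as the paper's proof: cost factors $(0,1)$ turn cost minimization into maximization of $v(X_1)$; every partition element is padded by a huge base so that the two-sided EFx inequalities, combined with the parity of $m$, pin both bundles to exactly $\tfrac{m}{2}$ padded items; and one small extra item serves as the EFx-removable anchor, with budget $\tfrac{m}{2}\cdot(\text{base})+T$. Where you genuinely diverge is in how the anchor's placement is controlled in the soundness direction. The paper reduces from plain \textsc{Equal-cardinality partition} and gives the anchor value $T+1$: once the cardinalities are forced, the cost bound alone makes the anchor too expensive for agent $a_2$ (it would require $S_2 + T + 1 \le T$), after which the EFx inequality yields $S_1 \le S_2$ and hence $S_1 = S_2 = T$. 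Your anchor is tiny ($G = \lceil m\epsilon/2\rceil + 1$), so the budget cannot dictate its placement; instead you reduce from \textsc{Restricted weight equal-cardinality partition} and invoke its bounded-element promise to kill the case $g \notin X_1$, capping the top $\tfrac{m}{2}$-sum strictly below $T+G$. Both routes are valid: the paper's choice buys a simpler source problem (no bounded-element promise, so \Cref{thm:np-special} is not needed for this theorem), while yours shows the restricted variant already suffices and anticipates the way the paper deploys that promise in its inapproximability argument (\Cref{thm:inapprox}).

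One quantitative nit: your parenthetical ``say $R > 2T$'' is not sufficient. The cardinality-forcing step needs both directions of EFx, and the direction in which the envied bundle's least item is a padded item only gives $(t_2 - t_1 - 1)R \le 2T + \max_i s_i + G$; together with parity this forces $t_1 = t_2 = \tfrac{m}{2}$ only when $R$ dominates $2T + G + \max_i s_i$ (and you also need $R > G$ so that $g$ is the globally smallest item). Your closing sentence acknowledges that $R$ must be ``large enough to force the cardinality,'' but the proof should fix a concrete value, e.g. $R := (m+1)(2T + G + 1)$, mirroring the paper's choice $B := (m+1)(2T+1)$; with that substitution the argument goes through as written.
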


In the following proof we give a polynomial-time many-one reduction from the following weakly NP-hard \citep{NP-co1} problem:

\newcommand{\equalparti}{\textsc{Equal-cardinality partition}}

\problemdef{\equalparti}{equalparti}
{A multiset $S = \{s_1, \dots, s_m\}$ of $m$ positive integers with $\sum_{s \in S} s = 2\cdot T$.}
{Is there a partition of $S$ into $S_1,S_2$ such that $|S_1| = |S_2|$ and every subset sums up to $T$?}
\begin{proof} Let $I = S$ be an instance of \equalparti.
	If $m$ is odd or some $s_i > T$, then $I$ is a trivial  \texttt{no}-instance of \equalparti{}, and we may map it to any fixed \texttt{no}-instance of \minEfxOptim with restricted cost functions. Hence assume $m$ is even and $s_i \leq T$ for all $i$.
	
	We construct an instance $I' := (A, M, v, c)$ of \minEfx \ with restricted cost functions in polynomial time as follows. 
	\vspace{.5em}\\
	\textbf{Construction:}
	Define 
	\[
	B := (m+1)(2T+1).
	\]
	For each $s_i \in S$, add a number item $x_i$ with valuation $v(x_i) = s_i + B$. Additionally include one item $x_{\min}$ with $v(x_{\min}) = T+1$. Finally we set the cost factors to $\alpha_1=0$ and $\alpha_2=1$. This finishes the construction, which clearly takes polynomial time.
	
	We show that $I$ is a \texttt{yes}-instance if and only if there exists a solution to $I'$ of cost  $\leq \tfrac{m}{2} \cdot B + T$.
	We again refer to the appendix for a proof of correctness.
	\end{proof}
	\appendixproof{thm:np}{
	($\Rightarrow$) Let $(S_1,S_2)$ be a solution of $I$. Consider the allocation
	\[
	X_1 = \{x_i \mid s_i \in S_1\} \cup \{x_{\min}\}, \qquad
	X_2 = \{x_i \mid s_i \in S_2\}.
	\]
	Then $v(X_1) = \tfrac{m}{2}B + 2T + 1$, $v(X_2) = \tfrac{m}{2}B + T$, and therefore $\text{cost}(X) = v(X_2) \leq \tfrac{m}{2} \cdot B + T$. Since $\min_v X_1 = x_{\min}$,
	\[
	v(X_1) - v(\min_v X_1) = \tfrac{m}{2}B + 2T + 1 - (T+1) = \tfrac{m}{2}B + T = v(X_2),
	\]
	so $a_2$ does not strongly envy $a_1$ (and $a_1$ clearly does not envy $a_2$). Hence $X$ is EFx and a valid solution to $I'$.
	
	($\Leftarrow$) Let $\mathcal{X} = (X_1,X_2)$ be a solution of $I'$. Note that
	\[
	v(M) = mB + 3T + 1, \qquad v(X_2) = cost(\mathcal{X}) \leq \tfrac{m}{2}B + T, \tag{1}
	\]
	so
	\[
	v(X_1) = v(M) - v(X_2) \geq \tfrac{m}{2}B + 2T + 1. \tag{2}
	\]
	
	Let $t_j$ denote the number of number-items in $X_j$ (so $t_1 + t_2 = m$), and $S_j =\sum_{x_i \in X_j} s_i$ the sum of their associated values (so $S_1 + S_2 = 2T$). Let $r \in \{0,1\}$ indicate whether $x_{\min} \in X_1$ ($r=1$) or $x_{\min} \in X_2$ ($r=0$). Then
	\[
	v(X_1) = t_1 B + S_1 + r(T+1), \quad v(X_2) = t_2 B + S_2 + (1-r)(T+1). \tag{3}
	\]
	
	By the choice of $B$, inequalities (2) and (3) imply $t_1 = t_2 = \tfrac{m}{2}$, since otherwise one agent would clearly envy the other. Thus $|S_1|=|S_2|$. With $t_2 = \tfrac{m}{2}$ and  inequalities (1) and (2) we get
	\[
	S_2 + (1-r)(T+1) \leq T \quad \Rightarrow \quad S_2 \leq T - (T+1) + r(T+1) = r(T+1)-1. 
	\]
	As all $s_i \geq 0$, we have $S_2 \geq 0$, hence $r = 1$, implying $x_{\min} \in X_1$ and $S_2 \leq T$.
	
	Now in order for $\mathcal{X}$ to be a valid EFx allocation
	\[
	v(X_2) \geq v(X_1) - v(g) \ \text{ for all } g \in X_1.
	\]
	Since $\min_{x\in X_1} v(x) = T+1$, this becomes
	\[
	t_1B + S_1 \leq t_2B + S_2 \quad \Rightarrow \quad S_1 \leq S_2.
	\]
	Together with $S_1 + S_2 = 2T$ and $S_2 \leq T$, this forces $S_1 = S_2 = T$.
	
	Thus $(S_1,S_2)$ is an equal-cardinality partition of $S$, so $I$ is a \texttt{yes}-instance. \qed}

We are using a similar technique to prove:

\begin{theorem}[$\star$]\label{thm:snp}
	\minEfx \  under restricted cost functions is NP-hard, even if all numbers in the input are encoded in unary.
\end{theorem}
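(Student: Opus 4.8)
The plan is to reduce from the strongly NP-hard problem \textsc{3-Partition}, which remains NP-hard even when all integers are encoded in unary; this is exactly what distinguishes the present statement from \Cref{thm:np}, whose source problems (\parti{} and its equal-cardinality variant) are only weakly NP-hard and hence become polynomial-time solvable under a unary encoding. Recall that in \textsc{3-Partition} we are given $3m$ positive integers $s_1,\dots,s_{3m}$ with $\sum_i s_i = mB$ and $B/4 < s_i < B/2$ for all $i$, and we ask whether they can be partitioned into $m$ triples each summing to exactly $B$; the open-interval condition guarantees that every group of total value $B$ contains exactly three elements.

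Mimicking the technique of \Cref{thm:np}, I would introduce $m$ agents (one per intended triple) and, for each integer $s_i$, an item $x_i$ with $v(x_i)=s_i$, so that the common valuation satisfies $v(M)=mB$ and the target allocation gives every agent value exactly $B$. The cost factors would be chosen as in the two-agent squeeze---one (or several) free agent(s) with $\alpha=0$ against paying agent(s) with $\alpha=1$---together with a threshold $k$ calibrated to the balanced outcome, and possibly a small number of auxiliary zero- or anchor-valued items whose sole purpose is to make the EFx inequalities bite at the desired place. For the easy direction, any valid \textsc{3-Partition} yields the allocation in which each agent receives one triple; since all bundles then have value $B$, this allocation is envy-free, hence EFx, and its cost meets the threshold $k$.

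For the reverse direction, the intended argument combines three ingredients, exactly as in \Cref{thm:np}: \Cref{threshold} supplies a lower bound on every bundle value; the cost bound $c(\mathcal{X})\le k$ supplies the complementary upper pressure; and the fixed total $v(M)=mB$ closes the gap, forcing each bundle to value exactly $B$, at which point the interval condition $B/4<s_i<B/2$ forces each bundle to consist of exactly three items and thus yields a \textsc{3-Partition}. Because every number produced is polynomially bounded in $m$ and $B$, the whole reduction runs in time polynomial in the unary size of the input, giving NP-hardness under a unary encoding.

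I expect the main obstacle to be precisely this reverse direction. The two-agent squeeze of \Cref{thm:np} succeeds because a single inequality on the payer's bundle, one EFx inequality, and the fixed total already pin both bundles to $T$. With $m$ agents the difficulty is that an item-independent factor cost can only bound a single weighted sum $\sum_j \alpha_j v(X_j)$ of bundle values, and minimizing such a sum rewards concentrating value on the cheap agents rather than balancing it across all $m$ bundles. The delicate part of the construction is therefore to choose the factors, the threshold, and the auxiliary items so that the $m$ per-bundle lower bounds from \Cref{threshold} and the global cost bound together leave the balanced value-$B$ profile as the only cost-feasible EFx allocation; ruling out the near-balanced-but-unequal EFx allocations that the one-item slack of EFx would otherwise permit is where the real work lies.
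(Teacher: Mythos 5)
Your high-level plan---reduce from a strongly NP-hard packing problem, use the free-agent/paying-agent cost split with factors $0$ and $1$, and add anchor items---is the same family of argument as the paper's, which reduces from \textsc{Tight Bin Packing} rather than \textsc{3-Partition} (a minor difference, except that the paper's choice of source problem also yields the $W[1]$-hardness of \Cref{w1} for free). However, your proposal contains a genuine gap, and you have located it yourself: the soundness direction is not merely ``where the real work lies,'' it is a step that your three stated ingredients provably cannot carry. With $m$ agents, items in $(B/4,B/2)$, and the free agent chosen among the triple-agents, \Cref{threshold} gives a per-bundle lower bound of only about $B/2 + B/(2m)$, nowhere near $B$; the cost bound constrains only the weighted sum $\sum_j \alpha_j v(X_j)$; and the fixed total $v(M)=mB$ adds nothing further. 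This leaves a large family of unbalanced value profiles (the free agent's bundle can a priori reach roughly $3B/2$ under EFx) that nothing in your argument excludes, so the ``calibrated threshold $k$'' you posit need not separate \texttt{yes}- from \texttt{no}-instances, and you give no argument that the balanced allocation is even cost-minimal in a \texttt{yes}-instance.

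The missing idea, which is the heart of the paper's proof, is the precise anchor mechanism: introduce one \emph{extra} free agent ($\alpha_1=0$) beyond the $n$ paying agents ($\alpha_i=1$), and add \emph{two} anchor items whose value equals the bin capacity $B$, which is also the maximum value of any item; set the budget to $nB$. Then cost at most $nB$ together with $v(M)=(n+2)B$ forces the free agent's bundle $X_1$ to satisfy $v(X_1)\ge 2B$; since no item exceeds $B$, the EFx inequality of every paying agent against $X_1$ gives $v(X_i)\ \ge\ v(X_1)-\min_{x\in X_1}v(x)\ \ge\ 2B-B\ =\ B$; and the budget then forces $v(X_i)=B$ for all paying agents. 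This squeeze is exactly what rules out the ``near-balanced-but-unequal'' EFx allocations you worried about: the per-bundle lower bound of exactly $B$ comes not from \Cref{threshold} but from the EFx inequality against the anchored bundle, whose value the cost budget itself pushes up to $2B$. The same device completes your \textsc{3-Partition} reduction essentially verbatim (the interval condition then forces each paying bundle to be a triple), so your approach is salvageable---but as written it is a plan with an acknowledged hole where the paper's central trick belongs, not a proof.
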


In the following proof we give a polynomial-time many-one reduction from the following strongly NP-hard \citep{koutecky_et_al:LIPIcs.ISAAC.2020.18} problem:

\newcommand{\binpacking}{\textsc{Tight Bin packing}}

\problemdef{\binpacking}{bin}
{Finite set $S$ of $m$ items, a size $s(i) \in \mathbb{N}$ for each $i \in S$, with $\sum_{i \in S} s(i) = nB$, a positive integer bin capacity $B$, and a positive integer $n$.}
{Is there a partition of $I$ into disjoint sets $S_{1}, \ldots, S_{n}$ such that the sum of the sizes of the items in each $S_{j}$ is $B$ or less?}
\begin{proof}
	Let $I = (S, B, n)$ be an instance of \binpacking, where $\sum_{i \in S} s(i) = n B$ and $|S| = m$. We construct an instance $I' := (A, M, v, c)$ of  \minEfx \ in polynomial time as follows.
	\vspace{.5em}\\
	\textbf{Construction:} Note that all integers in \( S \) are less than or equal to \( B \), as otherwise, the instance would trivially be a \texttt{no}-instance.
	Let $A$ be a set of $n+1$ agents.
	For each item $i \in S$ we add an item $x_i$ to $M$ with valuation $v(x_i) = s(i)$. Furthermore we add two items $ \{x_{m+1}, x_{m+2}\} $ with valuations $v(x_{m+1}) = v(x_{m+2}) = B$. Lastly we set the cost factors $\alpha_1=0$ and $\alpha_i = 1$ for all other $a_i \in A\setminus\{a_1\}$. This finishes the construction. 
	
	We refer to the appendix for the proof of correctness.\end{proof}
	\appendixproof{thm:snp}{
	\textbf{Correctness:} We claim that $I$ is a \texttt{yes}-instance if and only if $I'$ has a solution of cost $\leq nB$.
	
	($\Rightarrow$) Let \( S = (S_1, \ldots, S_n) \) be a solution of \( I \). We claim that \( \mathcal{X} = (X_1, \ldots, X_{n+1}) \) with \begin{align*}
		X_1 = \{x_{m+1}, x_{m+2}\} \text{ and } X_i = \bigcup_{s_j \in S_{i-1}} \{ x_j \} \text{ for all } i \in \{2, \ldots, n+1\}
	\end{align*} is a solution of cost $n B$ to \( I' \). Note that by construction \( v(X_1) = 2 B \) and \( v(X_i) \leq B \) for all \( i \in \{1, \ldots, n+1\} \). Furthermore, since \( v(M \setminus X_1) = nB \), by the pigeonhole principle \citep{graham1981concrete} we can deduce that \( v(X_i) = B \) for all \( i \in \{2, \ldots, n+1\} \). This means that \begin{align*}
		\text{cost}(A) = 0 \cdot v(X_1) + \sum_{i=2}^{n+1} v(X_i) = n \cdot B 
	\end{align*}
	Lastly, we need to show that \( A \) is an EFx allocation. As \( v(X_i) = v(X_j) = B \) for all \( i, j \in \{2, \ldots, n+1\} \), agents \( a_i \) and \( a_j \) do not envy each other. Furthermore, since \( X_1 \) consists of 2 items of equal value, \( v(X_1) - \min_v X_1 = B \), no agent strongly envies \( a_1 \). Hence, \( \mathcal{X} \) is EFx and has cost $\leq nB$.
	
	\smallskip
	($\Leftarrow$) Let $\mathcal{X} = (X_1, \ldots, X_{n+1})$ be a solution of $I'$ with cost $\leq nB$.
	Note that:
	
	\[
	v(M) = (n+2) \cdot B \text{ and }
	nB \geq 0 \cdot v(X_1) + \sum_{i=2}^{n+1} v(X_i)
	\]
	which implies:
	\begin{align*}
		nB &\geq \sum_{i=2}^{n+1} v(X_i)
	\end{align*}
	Since \( v(M) = \sum_{i=1}^{n+1} v(X_i) =(n+2) \cdot B \), we get:
	\begin{align*}
		(n+2) \cdot B &= v(X_1) + \sum_{i=2}^{n+1} v(X_i)\\
		(n+2) \cdot B&\leq v(X_1) + nB \\
		2B &\leq v(X_1)
	\end{align*}
	
	Since $\mathcal{X}$ is an EFx allocation, we have:
	
	\[
	v(X_1) - \min_v X_1 \leq v(X_i) \quad \forall i \in \{2, \dots, n+1\}.
	\] By construction, the largest valuation of any item in $M$ is $B$. Therefore, we can write:
	\[
	v(X_1) - \min_v X_1 \geq v(X_1) - B = 2B - B = B.
	\]
	This implies that $v(X_i) \geq B$ for all \( i \in \{2, \ldots, n+1\} \). Given that the total valuation of the allocations to agents in $A \setminus \{a_1\}$ satisfies $\sum_{i=2}^{n+1} v(X_i) \leq n \cdot B$, it can be concluded that $v(X_i) = B$ for all \( i \in \{2, \ldots, n+1\} \). Consequently, the minimum item in $X_1$ must have value $B$, implying that $v(X_1) - \min_v X_1 = B$. From this, it can be deduced that $X_1$ is constituted of two items: $x_{m+1}$ and $x_{m+2}$. In the event that the maximum value of the set of items $S$ is equal to $B$, it is possible to exchange the corresponding item with one of the additional items.
	
	Lastly, we need to prove that $S = \{S_1, \ldots, S_n\}$ is a solution to $I$, where:
	
	\[
	S_i = \bigcup_{x_j \in X_{i+1}} s_j \quad \text{for all } i \in \{1, \ldots, n\}.
	\]
	Assume that $\sum S_i > B$ for some $i\in [n]$. Given that $\sum_{i = 2}^{n+1} v(X_i) = n \cdot B$, this inequality implies that there exists some $S_j$ such that $\sum S_j < B$. This contradicts the fact that $\sum S_j = v(X_{j+1}) \geq B$, as agent $a_{j+1}$ did not strongly envy agent $a_1$, and we have $v(X_1) - \min_v X_1 = B$. Thus, we conclude that $\sum S_i = B$ for all $i$, meaning that $I$ is a \texttt{yes}-instance for the \binpacking \ problem. \qed
}

\begin{corollary}\label{w1}
	\minEfx\ is $W[1]$-hard when parameterized by the number of agents $n$. 
	%In particular, there is no algorithm running in time $f(n)\cdot |I'|^{O(1)}$ for any computable function $f$ unless $\mathrm{FPT}=\mathrm{W[1]}$.
\end{corollary}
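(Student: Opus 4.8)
The plan is to recognize that the many-one reduction already constructed for \Cref{thm:snp} is in fact a \emph{parameterized} reduction with respect to the number of agents, and then to combine it with the known fixed-parameter intractability of bin packing when the number of bins is the parameter.

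First I would check the parameter bookkeeping of the existing construction. The reduction takes an instance $(S,B,n)$ of \binpacking, with $n$ the number of bins, and outputs a \minEfx{} instance on exactly $n+1$ agents. Crucially, the agent count $n+1$ depends only on the source parameter $n$, and not on the number of items $m$ or on the capacity $B$; moreover, the whole construction runs in polynomial (hence FPT) time. Together with the correctness already established in the appendix (yes-instance $\iff$ an EFx solution of cost $\le nB$ exists), this shows that the reduction is a valid parameterized reduction from \binpacking{} parameterized by the number of bins to \minEfx{} parameterized by the number of agents.

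Next I would invoke the external ingredient: \binpacking{} parameterized by the number of bins is $W[1]$-hard, and remains so even when all item sizes are encoded in unary (this is the unary bin-packing-by-number-of-bins result of Jansen, Kratsch, Marx, and Schlotter; the tight variant used here follows by a routine padding argument, filling each bin to capacity with unit items, which preserves both unary encoding and the bin count). Since our reduction introduces only valuations equal to the unary sizes $s(i)$, two auxiliary items of value $B$, and cost factors in $\{0,1\}$, the produced \minEfx{} instance again has all numbers polynomially bounded and unary-encodable. Composing the two reductions therefore yields $W[1]$-hardness of \minEfx{} under restricted cost functions, parameterized by $n$, even when the input is unary.

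Finally, I would observe that restricted cost functions are a special case of general additive cost functions --- the factor model $c_j(X)=\alpha_j v(X)$ is realized by the additive costs $c_j(x)=\alpha_j v(x)$ --- so the hardness immediately transfers to the general \minEfx{} problem stated in the corollary, and in particular shows that the XP algorithm of \Cref{thm:polybound-cost} cannot be improved to an FPT algorithm in $n$ unless $\mathrm{FPT}=W[1]$. The only genuinely delicate point is the parameter accounting: one must confirm that the number of agents is a pure function of the number of bins (guaranteed by the $n+1$ construction), since any blow-up depending on $m$ or $B$ would break the parameterized reduction. The $W[1]$-hardness of bin packing itself is imported rather than re-proved.
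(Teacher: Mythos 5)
Your proposal is correct and follows essentially the same route as the paper: reuse the reduction from \Cref{thm:snp} as a parameter-preserving reduction (agents $= k+1$ where $k$ is the number of bins) and import the $W[1]$-hardness of bin packing parameterized by the number of bins. In fact, your treatment is slightly more careful than the paper's two-sentence argument, since you explicitly cite the unary bin-packing result and justify the \emph{tight} variant via a padding argument, details the paper leaves implicit.
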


\begin{proof}
	The reduction above maps an instance of \binpacking\ with parameter $k$ (the number of bins) to an instance of \minEfx\ with parameter $n=|A|=k+1$. Since \binpacking\ is $W[1]$-hard parameterized by $k$, the parameter-preserving reduction implies that \minEfx\ is $W[1]$-hard parameterized by $n$.
\end{proof}

\subsection{Parameterized Complexity}

Given the hardness established in the previous section, we now aim to investigate the influence of specific parameters on the computational complexity of \minEfx \ under restricted cost functions, with the objective of identifying conditions under which it becomes more tractable. Specifically we study how the diversity of item types impacts complexity. Fewer item types generally lead to a simpler allocation strategy, thereby decreasing computational demands. This restriction is also natural in many allocation contexts, where items often belong to a small number of well-defined categories.
Recall that under bounded item valuations, \Cref{thm:dppoly-cost} already implies XP in $n$ for bounded valuations.

\begin{theorem}\label{thm:types}
	For instances with restricted cost functions and $\beta$ item types, \minEfx\ can be solved in
	$\mathcal{O}(n^2 \cdot m^{3\beta})$ time.
\end{theorem}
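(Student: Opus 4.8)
The plan is to exploit that, under identical additive valuations, both the EFx constraint and the cost depend on a bundle only through its \emph{type profile}. With $\beta$ item types, let $u_1 > \dots > u_\beta$ be the distinct item values and $m_1, \dots, m_\beta$ their multiplicities. Every bundle is completely described by a \emph{configuration} $\mathbf{c} = (c_1, \dots, c_\beta)$ with $0 \le c_t \le m_t$; its value is $v(\mathbf{c}) = \sum_{t=1}^{\beta} c_t u_t$ and the value of its cheapest item is $u_{t^*}$ for the largest $t^*$ with $c_{t^*} > 0$. The number of configurations is at most $\prod_{t=1}^{\beta}(m_t+1) \le (m+1)^{\beta}$. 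Since valuations are identical, EFx is equivalent to the single global inequality
\[
\max_j\bigl(v(X_j) - \min_{x \in X_j} v(x)\bigr) \;\le\; \min_i v(X_i),
\]
so feasibility is a statement purely about configurations and their values, while the cost is $\sum_j \alpha_j\, v(X_j)$.

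The main obstacle is that this inequality couples all bundles simultaneously through the global minimum $\min_i v(X_i)$, which a per-agent procedure cannot evaluate locally. I would remove this coupling by \emph{guessing} the value $V := \min_i v(X_i)$ of the least-valued bundle. Once $V$ is fixed, call a configuration $\mathbf c$ \emph{admissible} if $v(\mathbf c) \ge V$ and $v(\mathbf c) - u_{t^*} \le V$ (with the empty configuration admissible exactly when $V=0$). Any partition into admissible configurations then satisfies $\max_j(v(X_j)-\min_{x\in X_j}v(x)) \le V \le \min_i v(X_i)$ and is therefore EFx; conversely, every EFx allocation is a partition into admissible configurations for the choice $V=\min_i v(X_i)$. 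Hence ranging $V$ over the attainable bundle values, of which there are at most $(m+1)^{\beta}$, captures every EFx allocation while never producing an infeasible one.

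For a fixed $V$ I then solve a min-cost assignment problem by dynamic programming. Process the agents $a_1, \dots, a_n$ one at a time; the state $(j,\mathbf u)$ records how many items of each type have already been committed, i.e.\ a vector $\mathbf 0 \le \mathbf u \le (m_1, \dots, m_\beta)$, of which there are at most $(m+1)^{\beta}$. From $(j,\mathbf u)$ I try every admissible $\mathbf c \le (m_1,\dots,m_\beta)-\mathbf u$ for $a_j$, move to $(j+1,\mathbf u + \mathbf c)$, and add $\alpha_j\, v(\mathbf c)$, keeping the minimum. The optimum for this $V$ is read from the states $(n+1,(m_1,\dots,m_\beta))$ in which all items are used. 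Because the program directly minimises $\sum_j \alpha_j v(X_j)$, no separate rearrangement argument pairing small $\alpha_j$ with large bundles is needed; this is handled implicitly by the minimisation.

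Correctness follows from a routine induction showing that the table at stage $j$ stores the minimum cost over all partial assignments of admissible configurations to $a_1, \dots, a_{j-1}$ using exactly $\mathbf u$ items; together with the guessing argument this yields the global optimum, and existence of EFx allocations under identical valuations guarantees that some guess of $V$ admits a completion. For the running time, each of the $O((m+1)^{\beta})$ guesses runs a dynamic program with $O(n\,(m+1)^{\beta})$ states and $O((m+1)^{\beta})$ transitions per state, i.e.\ $O(n\,(m+1)^{2\beta})$ time, giving $O(n\,(m+1)^{3\beta}) = O(n\, m^{3\beta})$ overall, which lies within the claimed $O(n^2 \cdot m^{3\beta})$ bound even after the $O(n)$ cost of reconstructing and verifying the returned allocation. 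The only genuinely delicate step is the decoupling: checking that fixing $V=\min_i v(X_i)$ converts the one global EFx inequality into the independent per-bundle conditions above, which is precisely what makes the per-agent dynamic program valid.
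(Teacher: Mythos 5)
Your proof is correct, and it follows the same overall paradigm as the paper's: decouple the global EFx condition by fixing a threshold tied to the minimum bundle value, after which feasibility becomes a per-configuration condition, and then run a dynamic program over agents whose states are item-type count vectors. The difference lies in what is guessed. The paper guesses the identity of the agent receiving the minimum-value bundle \emph{together with that bundle's entire type profile} $\mathbf{x}$ (so $n \cdot m^{\beta}$ guesses, with $bound_{\mathbf{x}} = v(\mathbf{x})$ playing the role of your $V$), sorts the remaining agents by decreasing cost factor, and runs a DP over those $n-1$ agents in the four-index table $D[j,\mathbf{x},j',\mathbf{m'}]$; this yields the stated $\mathcal{O}(n^2 \cdot m^{3\beta})$ bound. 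You instead guess only the scalar value $V = \min_i v(X_i)$, which ranges over at most $(m+1)^{\beta}$ attainable bundle values, and let a single DP over all $n$ agents (in arbitrary order) handle both feasibility and the pairing of cost factors with bundle values. Your decoupling is sound even when the allocation found has minimum bundle value strictly larger than the guessed $V$, since admissibility of every configuration already implies EFx, and it is complete because the true minimum value is among the guesses; this is exactly the delicate point you identify, and you argue it correctly. Your version saves a factor of $n$, giving $\mathcal{O}(n \cdot m^{3\beta})$, and dispenses with both the agent-sorting step and the designated-agent bookkeeping; what the paper's formulation buys in exchange is an explicit witness bundle, which its correctness argument leans on. Both analyses share the benign imprecision that $(m+1)^{\beta}$ is bounded by $\mathcal{O}(m^{\beta})$ only because $\beta \le m$, which always holds.
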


We will describe a simple dynamic programming algorithm.
% inspired by \citet{koutecky_et_al:LIPIcs.ISAAC.2020.18}. \todo{ADD THEOREM!}
\smallskip 

\begin{definition}
	The DP table of an instance $I := (A, M, v, c)$ with $m = |M|$, $n=|A|$ and $\beta$ different item types $s_1, \dots, s_\beta$ is an integer table $D$ with entries
	\begin{align*}
		D[j,\mathbf{x}, j',\mathbf{m'}] \in  \mathbb{N}_0 \cup \{\infty\}.
	\end{align*}
	%In this context, an item \( x \) is considered to be of type \( s_i \) if and only if \( v(x) = s_i \) for all \( x \in M \). 
	Each subset $X \subseteq M$ can be represented by a $\beta$-dimensional vector $\mathbf{x} = (x_1,\dots,x_\beta)$, where $x_i$ denotes the number of items of type $s_i$ in $X$. Let $\mathbf{m} = (m_1,\dots,m_\beta)$ denote the vector representation of the full item set.
	
	The intended meaning of an entry $D[j,\mathbf{x},j',\mathbf{m'}]=c$ is as follows.
	Fix agent $a_j$ to receive the bundle encoded by $\mathbf{x}$.
	Consider the sub-instance consisting of the first $j'$ agents (ordered by non-increasing cost and excluding $a_j$),
	and an item multiset encoded by $\mathbf{m'} \le \mathbf{m}-\mathbf{x}$.
	Then $c$ is the minimum total cost of a feasible partial EFx allocation of these $j'$ agents using the items $\mathbf{m'}$
	(together with $a_j$ holding $\mathbf{x}$).
	If no such allocation exists, we set $D[j,\mathbf{x},j',\mathbf{m'}]=\infty$.
	
		The DP table $D$ is indexed by
		\begin{itemize}
			\item two $\beta$-dimensional integer vectors $\mathbf{x}, \mathbf{m'} \in \{0,\dots,m\}^\beta$,
			where $\mathbf{x}=(x_1,\dots,x_\beta)$ encodes the bundle assigned to a designated agent
			and $\mathbf{m'}=(m'_1,\dots,m'_\beta)$ encodes a subset of the remaining items. Both vectors are
			bounded componentwise by $\mathbf{m}$.
			\item two indices $j,j' \in \{1,\dots,n\}$, where $j$ denotes the designated agent receiving bundle $\mathbf{x}$ and $j'$ denotes the number of agents considered (in non-increasing cost order) in the partial allocation of the items encoded by $\mathbf{m}'$.
		\end{itemize}
	
\end{definition}

\begin{lemma}[$\star$]\label{thm:dpbeta}
	Given an instance $I = (A, M, v, c)$ the dynamic programming table of $I$ can be computed in $\mathcal{O}(n^2\cdot m^{3\beta})$ time.
\end{lemma}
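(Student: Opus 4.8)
The plan is to fill the table $D$ by induction on the number $j'$ of already-placed agents, exploiting the fact that under identical additive valuations the global EFx requirement collapses to a single threshold test anchored at the cheapest-valued bundle. Concretely, from the EFx definition (cf. \Cref{thm:diff} and the ``minStack'' picture), an allocation is EFx if and only if, writing $X_{\min}$ for a bundle of minimum value, \emph{every} bundle $X_b$ satisfies $v(X_b) - \min_{x \in X_b} v(x) \le v(X_{\min})$: for a fixed envied bundle the hardest envier is always the holder of the least-valued bundle, so it suffices to test envy toward the minimum. This is exactly why the table fixes a \emph{designated} agent $a_j$ together with its bundle $\mathbf{x}$ — we read $\mathbf{x}$ as the minimum bundle, so EFx is certified by a purely per-agent condition on each remaining bundle, with no interaction among them.

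For a fixed pair $(j,\mathbf{x})$ I would process the other $n-1$ agents in non-increasing order of their cost factor and fill $D[j,\mathbf{x},j',\mathbf{m'}]$ by choosing the bundle $\mathbf{b}\le \mathbf{m'}$ handed to the $j'$-th such agent. Call a vector $\mathbf{b}$ \emph{admissible} (relative to $\mathbf{x}$) when $v(\mathbf{b})\ge v(\mathbf{x})$ and $v(\mathbf{b}) - \min_v \mathbf{b} \le v(\mathbf{x})$; the first clause keeps $\mathbf{x}$ a genuine minimum, the second enforces that nobody strongly envies $\mathbf{b}$. The recurrence then reads
\[
D[j,\mathbf{x},j',\mathbf{m'}] = \min_{\substack{\mathbf{b}\le \mathbf{m'}\\ \mathbf{b}\ \text{admissible}}} \Big( D[j,\mathbf{x},j'-1,\mathbf{m'}-\mathbf{b}] + \alpha_{(j')}\cdot v(\mathbf{b}) \Big),
\]
where $\alpha_{(j')}$ is the cost factor of the $j'$-th remaining agent, with base cases $D[j,\mathbf{x},0,\mathbf{0}]=0$ and $D[j,\mathbf{x},0,\mathbf{m'}]=\infty$ for $\mathbf{m'}\neq \mathbf{0}$ (the designated agent's own contribution $\alpha_j v(\mathbf{x})$ is added at read-out in the proof of \Cref{thm:types}).

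For the running time I would first precompute, for every multiplicity vector $\mathbf{b}\in\{0,\dots,m\}^\beta$, the quantities $v(\mathbf{b})=\sum_i b_i\, v(s_i)$ and $\min_v \mathbf{b}$, so that admissibility of any $\mathbf{b}$ is tested in $O(1)$; this costs $O(m^\beta\cdot \beta)$. The table has $n\cdot (m+1)^\beta\cdot n\cdot (m+1)^\beta = O(n^2 m^{2\beta})$ cells, and each cell minimizes over the $\mathbf{b}\le \mathbf{m'}$, of which there are at most $(m+1)^\beta=O(m^\beta)$; one table lookup plus an $O(1)$ admissibility test per candidate gives a per-cell cost of $O(m^\beta)$. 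Multiplying yields the claimed $\mathcal{O}(n^2 m^{3\beta})$ bound.

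Correctness follows by induction on $j'$: assuming $D[j,\mathbf{x},j'-1,\cdot]$ stores the minimum cost of an admissible partial allocation of the given items to the first $j'-1$ remaining agents, the recurrence extends it optimally by one agent, and admissibility of each placed bundle is precisely the per-agent EFx certificate above. Completeness — that \emph{every} EFx allocation is represented — uses that some agent holds a minimum-value bundle, so the correct $(j,\mathbf{x})$ appears in the enumeration and realizes that allocation with all other bundles admissible. The step I expect to be the real obstacle is the EFx reduction itself: one must argue carefully that enforcing the threshold only against the single designated minimum bundle (rather than all $\binom{n}{2}$ pairs) is both sound and complete, and that the clause $v(\mathbf{b})\ge v(\mathbf{x})$ is exactly what guarantees $\mathbf{x}$ remains the minimum so that the reduction stays valid. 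The remainder is bookkeeping over the $\beta$-dimensional multiplicity vectors.
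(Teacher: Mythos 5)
Your proposal matches the paper's proof essentially step for step: the paper's DP likewise fixes a designated agent--bundle pair $(j,\mathbf{x})$, uses $v(\mathbf{x})$ as the threshold (its ``$bound_{\mathbf{x}}$'') with exactly your two admissibility constraints, processes the remaining agents in non-increasing cost-factor order with the leftover item multiset $\mathbf{m'}$ as state, and derives the same running time as ($n\cdot m^\beta$ designated pairs) $\times$ ($n$ agents $\times$ $m^\beta$ states $\times$ $m^\beta$ candidate bundles) $= \mathcal{O}(n^2 m^{3\beta})$. The only cosmetic differences are that the paper folds the designated agent's cost $\alpha_j\cdot v(\mathbf{x})$ into the base case rather than into the read-out, and your transition cost $\alpha_{(j')}\cdot v(\mathbf{b})$ states explicitly what the paper's update intends.
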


\appendixproof{thm:dpbeta}{

\begin{proof} 
	We begin by defining feasible vectors, which will be useful for constructing the DP table.
	
	\begin{definition}
		A vector $\mathbf{m'} \in \{0,\dots, m\}^\beta$ is said to \emph{satisfy the constraints} if, given a minimum threshold $bound$, it meets the following conditions:
		\begin{align*}
			\sum_{i \in [\beta]} m'_i \cdot s_i - \min_{\substack{i' \in [\beta] \\ m'_{i'} > 0}} s_{i'} &\leq bound, \\
			\sum_{i \in [\beta]} m'_i \cdot s_i &\geq bound.
		\end{align*}
		Here, $bound$ represents the minimum value that a set must achieve, as well as the largest permissible sum for a set excluding its smallest item, in order to satisfy the Efx constraint.
	\end{definition}
	
	The idea of our algorithm is to guess the minimum value each set must attain, and then identify a \minEfx\ where this minimum threshold is met. The number of feasible allocations is bounded by $m^\beta$. To prevent significant envy toward any particular set, the values of all other sets must be at least equal to this minimum value, as visualized in \Cref{minBound}. Furthermore, since this represents the minimum bound, no other set's value can exceed this minimum without including the smallest item. As this set does not necessarily need to have the lowest value, we compute the values for all possible allocations of this set to agents, which amount to $n$ possibilities.
	\begin{figure}[h!]
		\centering
		\begin{minipage}{0.8\linewidth}
			\centering
			\begin{tikzpicture}[scale=0.8]
				% Second image
				\foreach \x in {2, 4, 6} {
					\foreach \y in {0} {
						\draw[fill=airforceblue] (\x, \y) rectangle (\x+1, \y+1.5);
					}
				}
				
				\foreach \x in {2, 6} {
					\foreach \y in {1.5} {
						\draw[fill=airforceblue] (\x, \y) rectangle (\x+1, \y+1.5);
					}
				}
				\foreach \x in {4} {
					\foreach \y in {1.5, 2.5} {
						\draw[fill=airforceblue] (\x, \y) rectangle (\x+1, \y+1);
					}
				}
				
				\draw[fill=airforceblue] (6, 3) rectangle (7, 4.5);
				\draw[ultra thick, dashed,aliceblue] (-1,3)--(8,3);
				
				\draw[fill=aliceblue] (0, 0) rectangle (1, 2);
				\draw[fill=aliceblue] (0, 2) rectangle (1, 3);
				\draw[fill=aliceblue] (0, 3) rectangle (1, 4);
				
				\draw[ultra thick,airforceblue,decorate,decoration={brace,amplitude=12pt}]  (-.2, 0) -- (-.2, 3);
				\node[airforceblue] at (-2.25, 1.5){minBound}; 
				
			\end{tikzpicture}
			\caption{Example illustrating the minimum threshold each set must meet in a feasible EFx allocation}\label{minBound}
		\end{minipage}
	\end{figure}
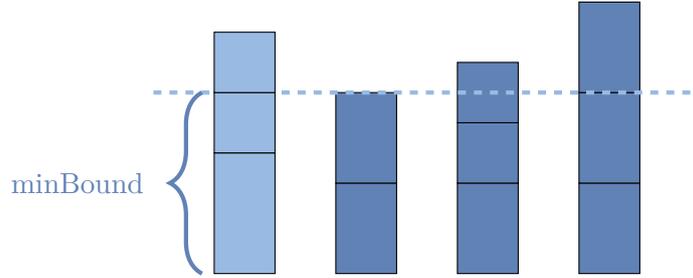
	
	We can now outline how to fill the table and compute its entries.
	
	\noindent
	\textbf{Construction:} For each possible value vector of \( X \), denoted as \( \mathbf{x} \), and each cost factor \( \alpha_j \) with \( a_j \in A \), compute the following:
	\begin{align*}
	bound_{\mathbf{x}} &= \sum_{i \in [\beta]} x_i \cdot s_i 
	%- \min_{\substack{i' \in [\beta] \\ x_{i'} > 0}} s_{i'}
	\\
	\text{cost($j,\mathbf{x}$)} &= \alpha_j \cdot \sum_{i \in [\beta]} x_i \cdot s_i
	\end{align*}
	Sort the remaining agents $A\setminus\{a_j\}$ in decreasing order of their cost factors, and index them as  $j'\in [n-1]$ according to this order. We begin by initializing the table \( D \) with all entries set to $\infty$. Then, set all  $D[j,\mathbf{x},0, \mathbf{0}] =  \text{cost}(j,\mathbf{x})$. 
For each \( j' \in [n-1] \) and every \( 0 \leq \mathbf{m'} \leq \mathbf{m - x} \), if
\(
D[j, \mathbf{x}, j'-1, \mathbf{m'} - \mathbf{z}] \neq \infty,
\)
then update \( D[j, \mathbf{x}, j', \mathbf{m'}] \) whenever
\[
D[j, \mathbf{x}, j', \mathbf{m'}] 
> D[j, \mathbf{x}, j'-1, \mathbf{m'} - \mathbf{z}] + \text{cost}(j, \mathbf{x}),
\]
and the vector \( \mathbf{z} \) satisfies the specified constraints 
(given $bound_\mathbf{x}$). 
	
In other words, for each $j' \in [n-1]$, construct the set $C_{j'}$ of feasible configurations of $X_{j'}$ that satisfy the given constraints.
Then, for each $\mathbf{m'}$ such that $D[j, \mathbf{x}, j'-1, \mathbf{m'}] \neq \infty$,
retain only the configurations that yield the minimum total cost.

Finally, return the minimum value $D[j, \mathbf{x}, n-1, \mathbf{m - x}]$ over all $a_j \in A$ and $\mathbf{x} \in [m]^{\beta}$.
	
	\smallskip\noindent
	\textbf{Running time:}
	The algorithm considers all possible allocations of a single set, resulting in a total of \( n \cdot m^\beta \) allocations. For each of these allocations, the algorithm computes the corresponding EFx bound. In each iteration, the algorithm processes every possible configuration \( \mathbf{m'} \leq \mathbf{m} - \mathbf{x}\), of which there are at most \( m^\beta \) many. For each configuration \( \mathbf{m'} \), it attempts to add each element from the set \( C_{j'} \), which also contains up to \( m^\beta \) feasible configurations. Therefore, the algorithm performs \(  (n \cdot m^\beta) \cdot (n \cdot m^\beta \cdot m^\beta) = n^2 \cdot m^{3\beta}  \) steps in total, leading to a runtime of \( \mathcal{O}(n^2 \cdot m^{3\beta}) \).
	
	\smallskip\noindent
	\textbf{Correctness:}
	Let $\pi: [n-1] \to A \setminus \{a_j\}$ denote the permutation ordering of all agents except $a_j$ in decreasing cost factor. 
	We claim that $D[j, \mathbf{x}, i, \mathbf{m'}] = c$ if and only if there exists an EFx-feasible assignment 
	$\mathcal{X}=(X_{\pi(1)}, \dots, X_{\pi(i)}, X_j)$ 
	such that $X_j$ corresponds to the value vector $\mathbf{x}$ and $cost(\mathcal{X})=c$. 
	We prove the claim by induction on $i$.
	
	Initially, for \( i=0 \), all entries of 
$D$ are trivially valid, since only agent $a_j$ is considered, and thus no pairwise comparisons are possible and strong envy cannot arise.
	
	Now assume that for an agent \( a_i \in A \), \( D[j,\mathbf{x}, i,\mathbf{m'}] \) is set to $c$ if and only if there exists an allocation $(X_{\pi(1)}, \dots, X_{\pi(i)}, X_j)$, where $(X_{\pi(1)}, \dots, X_{\pi(i)})$ is a partition of the good set represented by vector $\mathbf{m}'$ and the bundle \( X_j \) can be represented by \( \mathbf{x} \), that is EFx feasible, and has minimal cost $c$. We want to show that this also holds for $i+1$.
	
	\smallskip
	($\Rightarrow$) Suppose \( D[j,\mathbf{x}, i+1,\mathbf{y}] \) is set to $c'$. By our induction assumption, this implies that there exists a table entry \( D[j,\mathbf{x}, i,\mathbf{y'}] \) that is set to $c' - \sum_{z_j \in \mathbf{z}} \alpha_{i+1} \cdot z_j \cdot v(x_j)$, where \( y_k = y'_k + z_k \) for all \( k \in [\beta] \), with \( \mathbf{z} \) being a vector satisfying the constraints and \( y_k \leq m_k - x_k \).
	
	Given that there is an EFx feasible assignment $\mathcal{X} = (X_{\pi(1)}, \dots, X_{\pi(i)}, X_j)$ such that the value of the bundle \( X_j \) is represented by \(\mathbf{x}\) and the sum of the items of each type considered so far is \( y'_k \) for all \( k \in [\beta] \), we can construct an assignment \( \mathcal{X}' \) where the value of the bundle \( X'_{i+1} \) is represented by the vector \( \mathbf{z} \), while \( X'_{i'} = X_{i'} \) for all other \( i' \in \{1, \dots, i\} \). Since \( \mathbf{z} \) satisfies the constraints, no strong envy is created with this allocation. By construction, 
	\[
	\sum_{i \in [\beta]} z_i \cdot s_i = v(X'_{i+1}) \geq bound_{\mathbf{x}}
	\]
	and
	\[
	\sum_{i \in [\beta]} z_i \cdot s_i - \min_{\substack{i' \in [\beta] \\ z_{i'} > 0}} s_{i'} = v(X'_{i+1}) - \min_v X'_{i+1} \leq bound_{\mathbf{x}},
	\]
	which ensures that the assignment remains EFx. Furthermore, since all other bundles remain unchanged, EFx feasibility continues to hold for \( \mathcal{X}' \).
	
	Finally, we update the cost of the current allocation if it is uninitialized or if the new cost is lower than the previously set cost. The agents are sorted in decreasing order of their cost factor, so minimizing the cost guarantees that the current allocation is the most efficient and least expensive option available. Thus, we conclude that there exists a valid assignment $\mathcal{X}=(X_{\pi(1)}, \dots, X_{\pi(i)}, X_j)$, in which the value of bundle \( X_j \) can be represented by \( \mathbf{x} \), such that the cost of \( \mathcal{X} \) is minimal.
	
	\smallskip
	($\Leftarrow$) Let $\mathcal{X}=(X_{\pi(1)}, \dots, X_{\pi(i+1)}, X_j)$ be an EFx assignment with minimal cost $c'$, where the value of \( X_{\pi(i+1)} \) is represented by the vector \( \mathbf{z} \) and the value of \( X_j \) is represented by the vector \( \mathbf{x} \). Then, there exists an assignment $\mathcal{X}=(X_{\pi(1)}, \dots, X_{\pi(i)}, X_j)$ that is EFx feasible. 
	%Since adding an agent can only increase the cost, the cost of this assignment is also minimal. 
	Let \( \mathbf{y'} \) be the value vector of the items assigned in this allocation. 
	By the inductive hypothesis, \( D[j, \mathbf{x}, i, \mathbf{y'}] \) stores the minimal achievable cost for this configuration, 
	that is, \( c = D[j, \mathbf{x}, i, \mathbf{y'}] \). 
	For every feasible extension vector \( \mathbf{z} \) satisfying the EFx constraints, 
	the resulting allocation extends \( \mathbf{y'} \) to a new value vector 
	\( \mathbf{y} = \mathbf{y'} + \mathbf{z} \) with corresponding cost update. 
	Since all feasible \( \mathbf{z} \) are considered, 
	\( D[j, \mathbf{x}, i+1, \mathbf{y}] \) stores the minimal achievable cost among all valid extensions, 
	thereby preserving the inductive invariant and completing the induction step.
	
	\smallskip
	We have shown that if \( D[j, \mathbf{x}, i, \mathbf{y}] \) is finite for some \( i \in [n-1] \), 
	then there exists an assignment of a subset of items in \( M \) to the first \( i \) agents, 
	sorted by decreasing cost factor, 
	such that the total value vector is \( \mathbf{y} \), 
	the allocation satisfies EFx feasibility, 
	and \( D[j, \mathbf{x}, i, \mathbf{y}] \) represents the minimal achievable cost for this configuration. 
	Moreover, this property propagates to \( D[j, \mathbf{x}, i+1, \mathbf{y}] \) through the update step, 
	thereby preserving both EFx feasibility and cost minimality.
	
	\smallskip
	$\therefore$ By the principle of induction, the claim holds for all $i\in [m]$.
\end{proof}

Given \Cref{thm:dpbeta}, we can prove \Cref{thm:types} as follows:
\begin{proof}[Proof (\Cref{thm:types})]
	Let  $I = (A, M, v, c)$ be an instance of \minEfx \ with cost factors and identical valuation functions $v: M \rightarrow \NN$ over a set of items $M = \{x_1, \dots, x_m\}$. We start by sorting the agents in decreasing order according to their cost factor, such that  $c_1 \geq c_2 \geq \dots \geq c_n$. Then we construct the table $D$ in $\mathcal{O}(n^2 \cdot m^{3\beta})$ time. For all $a_j \in A$ and $\mathbf{x} \in m^\beta$ such that \( D[j,\mathbf{x}, n-1,\mathbf{m-x}]\) we return the entry with the smallest cost.
	
	\smallskip\noindent
	\textbf{Correctness:} Let \( \mathcal{X} = (X_1, \dots, X_n) \) be an allocation of items that satisfies the EFx condition. Each agent's allocation can be described by a type vector \(\mathbf{x}_i\), which indicates the quantity of different types of items in their set. Consider the set \( X_j \), represented by the vector \(\mathbf{x}\), which has the highest value among all agents after removing their smallest items. This set \( X_j \) sets a benchmark for what we consider the minimum acceptable value that each set must attain to maintain fairness. As the dynamic programming entry \( D[j, \mathbf{x}, n-1, \mathbf{m-x}] \) is defined to be \texttt{true} if there exists a valid EFx allocation for the other \( n-1 \) agents that respects the given vector configuration \(\mathbf{x}\) to $a_j$, and the whole item set \(\mathbf{m}\), our algorithm would have set this value to \texttt{true}. Since the agents are sorted by decreasing cost factor, minimizing the overall cost of the allocation also involves minimizing the value of the most expensive set while ensuring that the EFx condition is not violated. This guarantees that the dynamic programming entry \( D[j, \mathbf{x}, n-1, \mathbf{m-x}] \) will reflect the minimal possible cost for a valid EFx allocation at each step. Therefore, if there exists an EFx allocation with a total cost less than or equal to the given threshold, the algorithm will return \texttt{true}. 
\end{proof}
}

\subsection{Approximation}

Even under this restriction on the cost function, we can prove that a nontrivial lower bound remains:
%Building on the inapproximability of \minEfx\ under general cost functions, we 
%now show that the situation improves significantly when restricting the class of cost functions.
%Despite this restriction, 
%show that a nontrivial lower bound remains:

\begin{theorem}\label{thm:inapprox}
	For any $\epsilon > 0$, unless $P = N P$ \minEfx\ under restricted cost functions is not approximable in polynomial time within a factor of $\frac{4}{3} -\epsilon$.
\end{theorem}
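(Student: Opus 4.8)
The plan is to give a gap-preserving reduction from \equalparti\ (\Cref{prob:equalparti}), which is weakly NP-hard, engineered so that the optimum equals $2T'$ on yes-instances and at least $\tfrac{8}{3}T'(1-o(1))$ on no-instances. Let $S=\{s_1,\dots,s_m\}$ with $\sum_{s\in S}s=2T$ be an \equalparti\ instance (if $m$ is odd, map to a fixed no-instance). I fix an offset $D:=2T+1$ and create one \emph{number item} $x_i$ with $v(x_i)=s_i+D$ for each $s_i$, together with two \emph{heavy items} $h_1,h_2$ of value $T':=T+\tfrac{m}{2}D=\tfrac12\sum_i v(x_i)$. I take three agents with cost factors $\alpha_1=0$ and $\alpha_2=\alpha_3=1$. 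The total value is $4T'$, and since $a_1$ is the only cost-free agent, every allocation satisfies $\mathrm{cost}(\mathcal X)=v(X_2)+v(X_3)=4T'-v(X_1)$; minimizing cost is thus equivalent to \emph{maximizing} the value that EFx permits $a_1$ to hold.

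The offset is what lets equal-value splits detect \emph{equal-cardinality} partitions: a subset $X\subseteq\{x_1,\dots,x_m\}$ of cardinality $k$ has value $kD+\sum_{x_i\in X}s_i$, and since $|\sum_{S_1}s-\sum_{S_2}s|\le 2T<D$, two subsets of the number items have equal value iff they have equal cardinality $m/2$ and equal sum $T$; hence the number items admit a balanced value split iff $S$ is a yes-instance. For \textbf{completeness}, a yes-instance yields $S_1,S_2$ with $|S_1|=|S_2|=m/2$ and $\sum s=T$ on each side; assigning $X_1=\{h_1,h_2\}$, $X_2=\{x_i:s_i\in S_1\}$, $X_3=\{x_i:s_i\in S_2\}$ gives $v(X_2)=v(X_3)=T'$ and $v(X_1)=2T'$ with removable value exactly $T'$, so no agent strongly envies another. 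This EFx allocation has cost $4T'-2T'=2T'$.

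For \textbf{soundness} I must show that on a no-instance every EFx allocation has $v(X_1)\le \tfrac{4}{3}T'+o(T')$. The key input is that each number item has value at most $D+\max_i s_i\le D+T$, which is $o(T')$ relative to $T'=T+\tfrac{m}{2}D$ as $m$ grows. Fix an EFx allocation. If $X_1$ uses no number item then $X_1\subseteq\{h_1,h_2\}$; the only way to reach $2T'$ is $X_1=\{h_1,h_2\}$, whose slack $T'$ forces $v(X_2)=v(X_3)=T'$, i.e.\ a balanced split of the number items, impossible on a no-instance, so $v(X_1)\le T'$. If $X_1$ contains a number item then $\min_{x\in X_1}v(x)\le D+T=o(T')$, so the slack $v(X_1)-\min_{x\in X_1}v(x)\ge v(X_1)-o(T')$ must be met by both other bundles (\Cref{threshold}, \Cref{thm:diff}), giving $2\bigl(v(X_1)-o(T')\bigr)\le v(X_2)+v(X_3)=4T'-v(X_1)$ and hence $v(X_1)\le \tfrac{4}{3}T'+o(T')$. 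In either case $\mathrm{cost}(\mathcal X)=4T'-v(X_1)\ge \tfrac{8}{3}T'-o(T')$.

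Combining the cases, the ratio between the no- and yes-optima is at least $\tfrac{(8/3)T'-o(T')}{2T'}=\tfrac43-o(1)$, where the $o(1)$ tends to $0$ as $m\to\infty$. Since \equalparti\ is solvable by brute force on bounded-$m$ instances, hardness concentrates on large $m$, so for every fixed $\epsilon>0$ the reduced instances of sufficiently large $m$ have gap exceeding $\tfrac43-\epsilon$; a polynomial-time $(\tfrac43-\epsilon)$-approximation would then decide \equalparti, forcing $P=NP$. The \textbf{main obstacle} is precisely the soundness counting that pins the constant at $\tfrac43$: one must rule out an intermediate bundle with $\tfrac43 T'<v(X_1)<2T'$, and this relies on the number items being vanishingly small relative to $T'$ (so including even one blows the EFx slack up to essentially $v(X_1)$) while being large enough in absolute terms to encode cardinality through the offset $D$. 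Balancing these two requirements, and verifying that the $2T'$ bundle genuinely forces a perfect balanced split, is the delicate part of the argument.
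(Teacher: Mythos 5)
Your proof is correct and takes essentially the same route as the paper: the paper reduces from \specialparti{} (itself obtained from \equalparti{} by adding large offsets, \Cref{thm:np-special}), then uses the identical three-agent gadget with cost factors $(0,1,1)$ and two heavy items each worth half the number-item total, deriving the same $\tfrac{4}{3}-o(1)$ gap from the fact that the EFx slack of the free agent's bundle is bounded by its smallest item. You have merely inlined the intermediate \specialparti{} step by building the offset $D=2T+1$ directly into the item values, and your soundness case analysis is in fact slightly more explicit than the paper's; the only slip is cosmetic ($\max_i s_i$ can be as large as $2T$, not $T$, but the resulting bound $D+2T=o(T')$ serves equally well).
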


We prove this via a gap-preserving reduction from \specialparti\ (see \Cref{thm:np-special}).

\begin{proof}[Proof (\Cref{thm:inapprox})]
	Let $I = S$ be an instance of \specialparti, where $\sum_{s \in S} s = 2T$ and $|S| = m$. We construct an instance of \minEfx \ in polynomial time as follows.
	
	\smallskip\noindent
	\textbf{Construction:} 
	For each integer $s_i \in S$ we add an item $x_i$ to $M$ with $v(x_i) = s_i$. Furthermore we add two items $ \{x_{m+1}, x_{m+2}\} $ with valuations $v(x_{m+1}) = v(x_{m+2}) = T$.
	
	\smallskip\noindent\textbf{Completeness:} Let $S= {S_1, S_2}$ be a \texttt{yes}-instance for \specialparti. We claim that $\mathcal{X} = (X_1, X_2, X_3)$ with \begin{align*}
		X_1 = \{x_{m+1}, x_{m+2}\} \quad X_2 = \bigcup_{s_i \in S_1}\{ x_i \}  \text{ and } X_3 = \bigcup_{s_i \in S_2} \{x_i \}
	\end{align*} is a \minEfx. Note that by construction $v(X_1) = 2T$ and $v(X_2) = v(X_3) = T$, which means \begin{align*}\text{cost}(\mathcal{X}) = 0 \cdot v(X_1) + v(X_2) + v(X_3) = T + T = 2 T\end{align*} Lastly we have to show that $\mathcal{X}$ is an EFx allocation. As $v(X_2) = v(X_3) = T$, $a_2$ and $a_3$ do not envy each other. Since $X_1$ consists of $2$ items of equal value, \begin{align*}
		v(X_1) - \min_v X_1 = T \leq v(X_2) = v(X_3)
	\end{align*} meaning no one strongly envies $a_1$. Hence there is an EFx allocation of cost $\leq 2T$.
	
	\smallskip\noindent
	\textbf{Soundness:} Let $S$ be a \texttt{no}-instance for \specialparti. Note that $\sum_{x \in M} v(x) = 4 T$ and \begin{align*}\text{cost}(\mathcal{X}) = 0 \cdot v(X_1) + v(X_2) + v(X_3) =  v(X_2) + v(X_3) = 4T - v(X_1)\end{align*}
	% Note: the largest set will have a value of at least $\frac{4T}{3}$
	Since the items in $S$ can not be equally split, we can deduce that no bundle can contain both $\{x_{m+1}, x_{m+2}\}$. By definition the next largest item has value at most $\frac{2T}{m} +\epsilon'$. From \Cref{thm:diff} we know that this corresponds to the maximum difference between $2$ sets. We can follow that $v(X_1) \leq \frac{4T}{3} + \frac{2T}{m} + \epsilon'$, meaning \begin{align*}\text{cost} \geq 4T - v(X_1) = 4T - \left(\frac{4T}{3} + \frac{2T}{m} + \epsilon'\right) = \frac{8T}{3} - \frac{2T}{m} - \epsilon'\end{align*}
	
	\noindent Following these results, we have \begin{align*}\frac{\frac{8T}{3} - (\frac{2T}{m} +\epsilon')}{2T} = \frac{4}{3} - (\frac{1}{2m} +\epsilon')\end{align*} and as $m$ increases, $\epsilon = \frac{1}{2m} + \epsilon'$ becomes very small.
	Therefore, unless $P = NP$, \minEfx \ with cost factors is not approximable within a factor of $\frac{4}{3}$ in polynomial time.
\end{proof}

\section{Discussion}

We introduced the \minEfx{} problem and analyzed its algorithmic complexity under both general and restricted cost functions. Our results leave several natural directions for further work.

We focused on two representative cost models, but many others remain unexplored. For instance, cost structures inspired by generalized bin packing \citep{GCBP} or scheduling \citep{sched} could lead to new approximation techniques or hardness barriers. A systematic classification of cost models according to the tractability of \minEfx{} would provide a clearer picture of the landscape.

Our work concentrated on identical valuations, since EFx existence results are currently limited. However, recent progress for structured settings, such as a few types of agents \citep{MAHARA2023115,ghosal2025almostfullefxand} or bi-valued instances \citep{AMANATIDIS202169}, suggests that our framework could be extended to these domains. Analyzing \minEfx{} under such restricted valuation classes would bridge the gap between the idealized identical case and more heterogeneous environments.

Most of our techniques extend to EF1 with only minor modifications (essentially making the “forced” item the maximum-valued one). Exploring EF1 with differing valuations or other fairness benchmarks such as maximin share (MMS), or maximizing Nash social welfare subject to cost minimization, could yield richer trade-offs between fairness and budget feasibility.

Finally, the empirical study of \citet{Morozov_Ignatiev_Dementiev_2024} suggests that EFx allocations are scarce in practice. This raises the question of whether efficient algorithms can nevertheless find low-cost EFx allocations quickly in real-world data, or whether relaxations (e.g., EF1 or approximate EFx) are necessary. A systematic experimental evaluation would complement our theoretical results and provide insights into the practical feasibility of \minEfx.

%%%%%%%%%%%%%%%%%%%%%%%%%%%%%%%%%%%%%%%%%%%%%%%%%%%%%%%%%%%%%%%%%%%%%%%%

 	\section*{Acknowledgments}
 	A substantial portion of this work was completed as part of my Master’s thesis at TU Berlin \citep{mincostefx}, under the joint supervision of Mathias Weller (TU Berlin) and Ola Svensson (EPFL). I am grateful to my PhD supervisor Robert Bredereck and to Leon Kellerhals (TU Clausthal) for valuable feedback.
 	I acknowledge support from the Deutsche Forschungsgesellschaft (German Research Foundation, DFG), project COMSOC-MPMS, (grant agreement No. 465371386).
% Add Ola, Robert (and maybe Leon and mathias), as well as DFG funding woop!

% --- Bibliography ---
\bibliographystyle{abbrvnat}
\bibliography{minCostEFx}

\newpage

\appendix
\section{Appendix}

\appendixProofText

\end{document}